\newcommand{\F}{{\mathbb{F}}}
\newtheorem{lemm}{Lemma}
\newtheorem{theo}[lemm]{Theorem}
\newtheorem{prop}[lemm]{Proposition}
\newtheorem{coro}[lemm]{Corollary}
\title{Quasi-cyclic Flexible Regenerating Codes}
\author{Bernat Gast\'on, Jaume Pujol, and Merc\`e
Villanueva \\
\small Department of Information and Communications Engineering,
\small Universitat Aut\`onoma de Barcelona \\
\small \texttt{\{Bernat.Gaston $|$ Jaume.Pujol $|$
Merce.Villanueva\}@uab.cat} \\
}
\begin{document}

\maketitle

\begin{abstract}
In a distributed storage environment, where the data is placed in nodes connected
through a network, it is likely that one of these nodes fails. It is known
that the use of erasure coding improves the fault tolerance and minimizes the
redundancy added in distributed storage environments. The use of regenerating
codes not only make the most of the erasure coding improvements, but
also minimizes the amount of data needed to regenerate a failed node.

In this paper, a new family of regenerating codes based on quasi-cyclic codes
is presented. Quasi-cyclic flexible minimum storage regenerating (QCFMSR)
codes are constructed and their existence is proved.
Quasi-cyclic flexible regenerating codes with minimum bandwidth constructed from
a base QCFMSR code are also provided. These codes not only achieve optimal MBR
parameters in terms of stored data and repair bandwidth, but also for an
specific choice of the parameters involved, they can be decreased under the
optimal MBR point.

Quasi-cyclic flexible regenerating codes are very interesting because of
their simplicity and low complexity. They allow exact repair-by-transfer in the
minimum bandwidth case and an exact pseudo repair-by-transfer in the MSR case,
where operations are needed only when a new node enters into the system
replacing a lost one.

\end{abstract}

\section{Introduction}
\label{sec:1}

The availability problem of stored data is an essential issue, which has
been studied intensively lately \cite{Pl01}. The increasing use of
distributed storage systems (DSS), like cloud storage, to a massive scale has
changed the paradigm of data storage. The amount of stored data in a DSS must be
minimized and the problem of storage device failures must be addressed.

The motivation for using erasure coding in DSS comes from the need to keep the
information available even when storage device failures occur and,
at the same time, to reduce the stored amount of data. Erasure codes in DSS
allow to achieve high fault tolerance (resistance to storage device failures
without losing information), requiring less storage
overhead (redundancy added to repair the files) that the one required by a data
replication scheme \cite{We01},
also known as backups. There are the obvious cost savings from purchasing less
hardware to store the data, but there are also
significant savings from the fact that this also reduces data centers
size, the power for running less hardware, among with other savings \cite{Hu01}.

The use of erasure coding in DSS minimizes the amount of
data stored in the system, but introduces what is known as the code repair
problem \cite{Pa01}: how to maintain the encoded representation given by the
erasure code when storage device failures occur, minimizing the amount of
stored data per node $\alpha$ and the bandwidth used to regenerate one node
$\gamma$. To maintain the same
redundancy when a storage node (device) leaves the system, a newcomer has
to join the system, access some existing nodes, download data from them, and
replace the lost node. This operation is not an exception, in fact, it is
usual.

This paper is organized as follows. In Section \ref{sec:2}, the repair problem
is exposed and some of the most known techniques used to minimize it are shown.
In Section \ref{MSRcodes}, the quasi-cyclic flexible minimum storage
regenerating codes are constructed, and their bounds and properties are proved.
In Section \ref{MBRcodes}, a technique used to produce minimum bandwidth codes
from minimum storage codes is formally explained and it is applied to the
proposed minimum storage codes to produce quasi-cyclic flexible regenerating
codes with minimum bandwidth. These codes are compared with other
regenerating codes with the same parameters. Finally, in Section
\ref{Conclusions}, the conclusions of this work are exposed.

\section{Minimizing the repair problem}
\label{sec:2}

There are various techniques used to minimize the bandwidth needed to
regenerate a failed node. In this section, we define the problem, and we review
some of the most well known techniques used in DSS to solve this problem.

Let $M$ be the size of a file. In a replication scheme, if each node
stores $\alpha$ data units, with $\alpha \ll M$, the newcomer must download
$\alpha$ data units to replace the lost node. However, in an erasure coding
scheme, the same newcomer must download $M$ data units to store only $\alpha$.

Denote $\F_{q}$ the finite field of $q$ elements. A linear code is a
linear map from $\F_{q}^k$ to $\F_{q}^n$ which converts vectors
of $k$ coordinates over $\F_{q}$ to vectors of $n$ coordinates over the same
field  called codewords. Note that $k$ is the dimension and $n$ the length of
the linear code, so the rate of the code is $R=k/n$. Let $d$ be the minimum
distance of the linear code. It is
well known that $d \le n-k+1$ \cite{MW}. When one or more coordinates of a
codeword are lost, it is possible to obtain the original vector over $\F_{q}^k$
by using an algorithm that requires at least $n-d+1$ correct coordinates of the
codeword, this
procedure is called decoding. We can say that a $[n,k,d]$ linear code encodes
$k$ symbols of $\F_q$ into $n$ symbols of $\F_q$ and repairs $d-1$ erasures.
From now on, we assume that the codes used are linear and are designed to repair
erasures. The classical decoding using linear codes always needs and uses
$n-d+1$ correct coordinates no matter if the codeword has one or more (up to
$d-1$)
erasures. Now, the next question arises, is it possible to repair one single
erasure requiring less than $n-d+1$ coordinates?

Take a file of size $M$ and split it into $k$ pieces of size $M/k$ over $\F_{q}$
organized as
a vector $v=(v_1,\ldots,v_k)$. Encode this vector using a $[n,k,d]$ code which
produces a codeword $c=(c_1,\ldots,c_n)$. Assume that each one of the
coordinates $c_i$ is stored in a different storage node $s_i$,
$i=1,\ldots,n,$ of a DSS. Note that any subset of $n-d+1$ of these nodes is
enough to recover the file. Moreover, the fault tolerance of the DSS is $d-1$,
since it is the maximum number of node failures that can be tolerated by the
DSS, where tolerated means that the DSS is still able to recover any piece of
the original stored information. If the code used is a Maximum Distance
Separable (MDS) code \cite{MW}, which is considered the best code in terms of
storage efficiency, the amount of redundancy in the system is minimum for a
given fault tolerance and $d=n-k+1$, so the file can be recovered by connecting
and downloading the data from any $k$ storage nodes.

The goal is to minimize the amount of stored data per node $\alpha$ and the
repair bandwidth $\gamma$, that is, the amount of data required to regenerate a
node. These parameters $\alpha$ and $\gamma$ can be
reduced by using at least two different techniques, which will be summarized in
the next two subsections.

\subsection{Locally Repairable Codes}

Let $C$ be an $[n,k,d]$ linear code over $\mathbb{F}_q$. We say that a
coordinate $i$ of $C$ has repair degree $r_i$ if we can recover any symbol at
coordinate $i$ by accessing at least $r_i$ other codewords symbols. The repair
degree of $C$, $r$, is the maximum of $r_i$, $i=1,\ldots,n$, and $C$ is called
an $r$ locally reparable code ($r$-LRC). Note that in a $r$-LRC any message
symbol can be recovered by accessing at most $r$ other codeword symbols.

In \cite{Go01}, it is shown that the minimum distance $d$ of a code is
uper bounded by $d \le n-k-\lceil \frac{k}{r} \rceil +2$, which means that a
as $r$ increases (approaching to $k$) $d$ decreases. It is also known that it is
a good metric for repair cost \cite{Og01}, \cite{Pa02}. Moreover, note that the
MDS codes have degree $r=k$. Locally repairable codes (LRC) try to keep $r$ at
very low rates, this means that for LRC it is a goal that $r \ll k$. Moreover,
the LRC codes are not MDS.

\subsection{Regenerating Codes}
\label{subsec:2.1}

Let $\F_q$ be a base field with $q$ elements and let $\F_{q^t}$ be an extended
field of $\F_q$ with $q^t$ elements. This means that each element of $\F_{q^t}$
is composed by $t$ elements of $\F_q$ called coordinates. Using an $[n,k,d]$
code over $\F_{q^t}$, we can encode an information
vector $v \in \F_{q^t}^k$ and produce a codeword $c \in \F_{q^t}^n$. This kind
of codes are called array codes, and each coordinate of $v$ or $c$ is called an
array coordinate. A well known example of an array code is the EVENODD code
\cite{Th01}.


In \cite{Di01}, a solution to minimize the bandwidth used to regenerate a
failed node is proposed. This solution consists of using regenerating codes. The
main idea of this kind of codes is the use of a network multicast technique for
data transmission optimization, called network coding \cite{NC01}, in
conjunction with array codes. The regenerating codes use linear codes seen over
the extended field to minimize the amount of stored data and use
network coding and the codes seen over the base field, to minimize the
bandwidth used to regenerate a failed node.

Let $C$ be a $\left[n,k,r\right]$ regenerating code, where the length $n$
is the total number of nodes in the system; the dimension $k$ is the value such
that any $k$ nodes contain the minimum amount of information necessary to
reconstruct the file; and the cardinal of the
set of helper nodes $r$ is the number of nodes necessary to regenerate
one failed node. Let $v \in \F_{q^t}^k$ be the information vector
representing the file to be stored in a DSS. Let $c \in \F_{q^t}^n$ be the
corresponding codeword after encoding $v$ using $C$. Note that if each array
coordinate of $c$ is stored in a node, and the code used is MDS, the parameters
$n$, $k$ and $d$ of the regenerating code coincide with the parameters $n$,
$k$, and $d$, of the linear code.
Moreover, in this case, $C$ is called a Minimum Storage Regenerating (MSR) code,
since it minimizes the storage overhead and $d=n-k+1$. However, it is also
possible to store more data in the same storage node, for example by adding an
extra set of elements over the base field, producing more redundancy, which
could be used to regenerate a failed node requiring less repair bandwidth. Using
this technique, it is possible to minimize the repair problem at the cost of
some extra overhead but maintaining $d=n-k+1$. When $C$ achieves the minimum
$\alpha$ such that $\alpha=\gamma$, $C$ is called a Minimum Bandwidth
Regenerating (MBR) code, and it can be seen that the parameters $n$, $k$ and $d$
do not coincide with the ones defined for a linear code.

Regenerating codes assume the data reconstruction condition: any $k$ nodes must
be enough to recover the file, which means that the minimum
distance must be $d=n-k+1$, so it is necessary to have $\alpha k \ge M$.
Moreover, if $d=n-k+1$ and $\alpha k = M$, we have an MDS code. Another
condition is
that the regeneration of any node in the system must require less
repair bandwidth than the total file size $M$, that is $\gamma < M$. If each
helper node sends $\beta$ data units, the repair bandwidth used is
$\gamma= \beta r$.  Then, $r$ must achieve $k < r < n$, unlike in LRC, since the
set of possible newcomers can not be greater than $n-1$, and if $r=k$, there is
no possible optimization in the repair bandwidth. As we will see, in order to
decrease the repair bandwidth by connecting to $r$ nodes, each helper node must
send less than $\alpha$ data units, so they send only a linear combination of
their coordinates over the base field.

Note the difference between LRC and regenerating codes. In both cases, the
repair degree is the number of helper nodes necessary to regenerate a failed
node. However, in LRC, each coordinate of a codeword $c$ is stored in one node,
and when we access to one helper node it means that we download the entire
coordinate contained in it. Therefore, in order to decrease $\gamma$ we need $r
\ll k$. In regenerating codes, to maintain the distance $d=n-k+1$, we need $r
\ge k$. Thus, $c$ is composed by array coordinates, the solution is to
increase $r$ but downloading less than an entire array coordinate from each
helper node. In other words, LRC are the result of creating new codes adapted to
a distributed storage systems, while regenerating codes are classical codes in
which a network coding technique is used to reduce the repair bandwidth.

In \cite{Di01}, the authors modelize the life of a DSS with a graph called
\textit{information flow graph}. This graph has three types of weighted edges:
internal storage edges that represent the amount of data stored per
node $\alpha$; regenerating edges that represent the amount of data sent by each
one of the helper nodes $\beta$; and edges that connect the
Source (the file to be stored) and the Data Collector (the user who access the
file) with the storage nodes. From this graph, one can compute the mincut
between the Source ($S$) and the Data Collector ($DC$) denoted by
$\mbox{mincut}(S,DC)$ and, after an optimization process, minimize $\alpha$ and
$\gamma$ resulting in a threshold function.

Despite it is not the aim of this paper to focus on information flow graphs,
Figure \ref{Fig:1} illustrates one of these information flow graphs for a
$[4,2,3]$ regenerating code. In general, the mincut equation is
$\mbox{mincut}(S,DC) \ge
\sum_{i=0}^{k-1} \min ( (r-i)\beta,
\alpha ) \ge M$, which means that for each newcomer up to $k$, it is composed by
either the weight of the set of regenerating edges or
the weight of the internal storage edge of the newcomer. Then, the optimization
over $\alpha$ and $\gamma$ gives the function

\begin{equation}
\label{eq:1}
 \alpha^*(r,\gamma) = \left\{ \begin{array}{lc}
             \frac{M}{k}, & \gamma \in [f(0), +\infty) \\
             \\ \frac{M-g(i)\gamma}{k-i}, & \gamma \in [f(i),f(i-1)) \\
                & i=1, \ldots, k-1,
             \end{array}
   \right.
\end{equation}
where
$$f(i) = \frac{2Mr}{(2k-i-1)i + 2k(r-k+1)} \text{ and }$$
$$g(i) = \frac{(2r-2k+i+1)i}{2r}.$$

\begin{figure}
\centering
\begin{tikzpicture}[shorten >=1pt,->]
  \tikzstyle{vertix}=[circle,fill=black!25,minimum size=18pt,inner sep=0pt,
node distance = 0.8cm,font=\tiny]
  \tikzstyle{invi}=[circle]
  \tikzstyle{background}=[rectangle, fill=gray!10, inner sep=0.2cm,rounded
corners=5mm]

  \node[vertix] (s) {$S$};

  \node[vertix, right=1cm of s] (vin_2)  {$v_{in}^2$};
  \node[vertix, below of=vin_2] (vin_3)  {$v_{in}^3$};
  \node[vertix, above of=vin_2] (vin_1)  {$v_{in}^1$};
  \node[vertix, below of=vin_3] (vin_4)  {$v_{in}^4$};

 \foreach \to in {1,2}
   {\path (s) edge[bend left=20,font=\tiny] node[anchor=south,above]{$\infty$}
(vin_\to);}
 \foreach \to in {3,4}
   {\path (s) edge[bend right=20,font=\tiny]  node[anchor=south,above]{$\infty$}
 (vin_\to);}

  \node[vertix, right of=vin_1] (vout_1)  {$v_{out}^1$};
  \node[vertix, right of=vin_2] (vout_2)  {$v_{out}^2$};
  \node[vertix, right of=vin_3] (vout_3)  {$v_{out}^3$};
  \node[vertix, right of=vin_4] (vout_4)  {$v_{out}^4$};

  \node[vertix, right of=vout_4, node distance = 2cm] (vin_5)
{\scriptsize{$v_{in}^5$}};
  \node[vertix, right of=vin_5] (vout_5)
{\scriptsize{$v_{out}^5$}};

  \node[vertix, right of=vout_1, above of=vin_5, node distance = 2cm] (vin_6)
{\scriptsize{$v_{in}^6$}};
  \node[vertix, right of=vin_6] (vout_6)
{\scriptsize{$v_{out}^6$}};

  \node[vertix, right of=vout_6, node distance = 1.5cm] (DC)
{DC};

   \path (vout_5) edge[bend right=20,font=\tiny]
node[anchor=south,above]{$\infty$}
 (DC);
   \path (vout_6) edge[bend left=10,font=\tiny]
node[anchor=south,above]{$\infty$}
 (DC);

  \path[->, bend left=15,font=\tiny] (vout_2) edge
node[anchor=south,above]{$\beta$} (vin_5);
  \path[->, bend left=10,font=\tiny] (vout_3) edge
node[anchor=south,above]{$\beta$}(vin_5);
  \path[->, bend left=20,font=\tiny] (vout_1) edge
node[anchor=south,above]{$\beta$}(vin_5);

  \path[->, bend left=15,font=\tiny] (vout_1) edge
node[anchor=south,above]{$\beta$} (vin_6);
  \path[->, bend left=10,font=\tiny] (vout_2) edge
node[anchor=south,above]{$\beta$}(vin_6);
  \path[->, bend left=10,font=\tiny] (vout_5) edge
node[anchor=south,above]{$\beta$}(vin_6);

 \foreach \from/\to in {1,2,3,4,5,6}
  { \path[->,font=\tiny] (vin_\from) edge node[anchor=south] {$\alpha$}
(vout_\to); }

  \draw[-,color=red,thick] (1.2,-0.5) -- (2.7,-1.2);
  \draw[-,color=red,thick] (1.2,-1.2) -- (2.7,-0.5);

  \draw[-,color=red,thick] (1.2,-1.2) -- (2.7,-1.9);
  \draw[-,color=red,thick] (1.2,-1.9) -- (2.7,-1.2);
\end{tikzpicture}

\caption{Information flow graph corresponding to a $[4,2,3]$ regenerating
code.}
\label{Fig:1}
\end{figure}
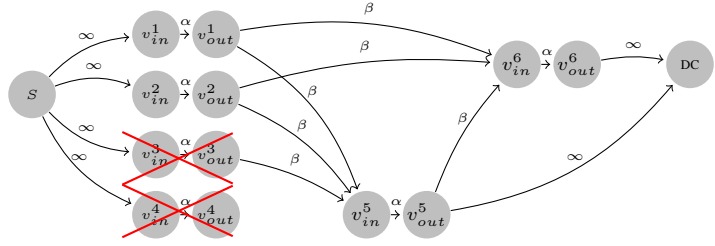

For an specific set of parameters $n$, $k$ and $r$ in function (\ref{eq:1}), it
is possible to find the optimal tradeoff curve representing the minimization of
$\alpha$ and
$\gamma$ as shown in Figure \ref{Fig:2} for a $[10,5,9]$ regenerating code. The
optimal parameters of the MSR code are given by the point with the minimum
$\alpha$, while the optimal parameters of the MBR code are given by the point
with the minimum $\gamma$. Between these two extremal points, there exist some
internal points which represent the  limits of the intervals that appear in the
threshold function (\ref{eq:1}).

\begin{figure}
\centering
\includegraphics[width=9cm, height=5cm]{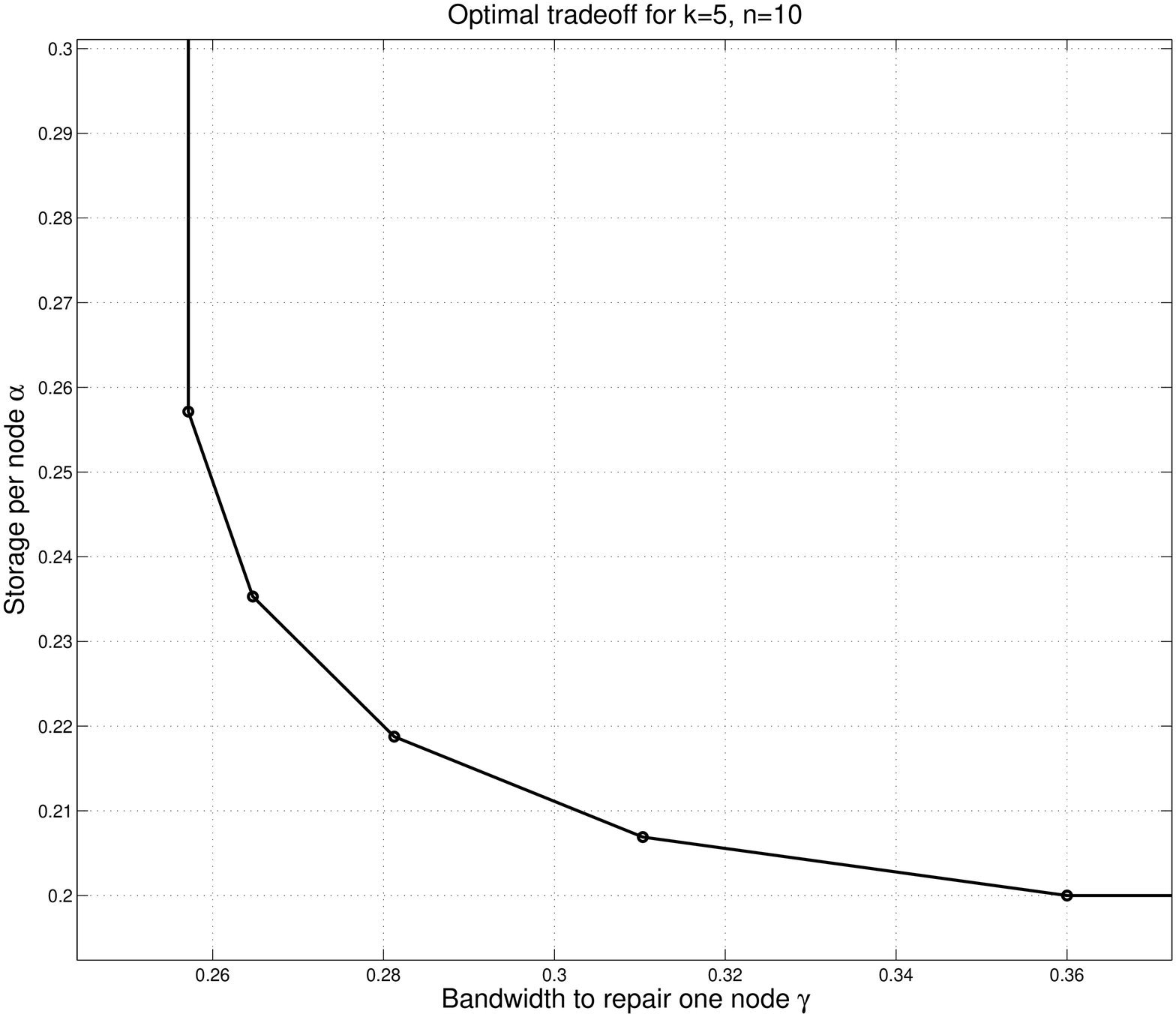}
\caption{
Optimal tradeoff curve between $\alpha$ and $\gamma$ for a $[10,5,9]$
regenerating code.
}
\label{Fig:2}
\end{figure}

To achieve this tradeoff curve, multiple techniques have been used, like
interference alignment \cite{Shah02}, product-matrix construction \cite{Ku02},
or designs \cite{Ro01}, among others. Behind these techniques, there are two
main ideas: the use of array codes, which allows the DSS to treat the data
inside a node as a set
of small independent coordinates over the base field; and the use of network
coding to send linear combinations of these coordinates through the network.

If any newcomer is able to exactly replicate the lost node, we say that the DSS
has the \textit{exact repair} property. Otherwise, if the newcomers store a
linear combination that does not reduce the dimension of $C$ but it is not
exactly the same as the included in the lost node, we say that the DSS has the
\textit{functional repair} property \cite{Di05}. Exact repair is much more
desirable than functional repair, since despite the number of failed nodes that
the DSS has repaired over an interval of time, it is possible to use systematic
encoding of the information and keep this systematic representation over the
time.
This means that there is always one accessible copy of the original file stored
in the DSS. It is worth to mention that in 
\cite{Ku03} it is proved that the interior points of the tradeoff curve are not
achievable using exact repair.

We say that a DSS has the \textit{uncoded repair} property if it is possible to
replace a failed node without doing any linear operation in the newcomer
neither in the helper nodes. There exist uncoded constructions for the MBR point
like the ones shown in \cite{Ku02} and \cite{Ro01}. However, for the MSR point,
there only exist uncoded constructions using functional repair \cite{Hu03}.

\section{Quasi-cyclic Flexible MSR codes}
\label{MSRcodes}

In this section, we describe the quasi-cyclic flexible minimum storage
regenerating (QCFMSR) codes in detail. Specifically, in Subsection
\ref{CodeConstruction}, we show how to construct them and some of their
properties; in Subsection \ref{Regeneration}, we see how to regenerate a
failed node; in Subsection \ref{DataReconstruction}, we prove their
existence by showing that the data reconstruction condition is achieved; and
finally, in  Subsection \ref{ExampleMSR}, we describe an example of a
$[6,3,4]$ QCFMSR code.

\subsection{Code Construction}
\label{CodeConstruction}

Let $C$ be an array code of length $n=2k$ and dimension $k$
over $\F_{q^2}$ constructed from the nonzero coefficients
$\zeta_1,\ldots,\zeta_k$ over $\F_q$, and for which the encoding is done over the base
field $\F_q$ in the following way. An information vector $v\in \F^k_{q^2}$ is seen as a vector $v=(v_1,\ldots,v_{n})$ over $\F_q$,
and is encoded into $c\in \F^{n}_{q^2}$ seen as a codeword $c=(c_1,\ldots,c_{2n})=(v_1,\ldots,v_{n},\rho_1,\ldots,\rho_{n})$ over $\F_q$,
where the redundancy coordinates $\rho_1,\ldots,\rho_{n}$ are given by the
following equation:
\begin{equation}
\label{eq:2}
 \rho_i = \sum_{j=i+1}^{k+i} \zeta_{j-i} v_{j} \quad i=1,\ldots,n,
\end{equation}
where $\zeta_l \in \F_q \setminus \left\lbrace 0 \right\rbrace$ for
$l=1,\ldots,k$ and $j=i+1,\ldots,k+i \mod n$. The rate of the code is $R=1/2$ and the encoding over $\F_q$ is done by using a
quasi-cyclic code \cite{MW} as we will see later. Quasi-cyclic codes are known
by their simplicity for encoding-decoding operations.

A $[2k,k,r]$ QCFMSR code over $\F_{q^2}$ is a regenerating code constructed from the array code $C$.
Take a file of size $M$ and split it into $k$ pieces over $\F_{q^2}$, or equivalently, into $n=2k$ pieces over $\F_q$ organized as
a vector $v=(v_1,\ldots,v_{n})$ over $\F_q$.
The $[2k,k,r]$ QCFMSR code over $\F_{q^2}$ is composed by a set of $n=2k$ storage nodes, denoted by $\left\lbrace s_1, s_2,\ldots, s_n \right\rbrace $, where each storage node $s_i$, $i=1,\ldots,n$, stores two coordinates over $\F_q$, $(v_i,\rho_i)$, which can be seen as one array coordinate over
$\F_{q^2}$. The size of each coordinate over $\F_q$ is $M/2k$ and the size of each array coordinate stored in $s_i$ is
$\alpha = M/k$.

%
%
%
%

Let $S$ be the set of all subsets of $\{1,\ldots,n\}$ of size $k$. Let $D$ be an
$n\times n$ matrix over $\F_q$ and let $s=\{i_1,\ldots,i_k\}\in S$.  Let
$D^{\left\lbrace i \right\rbrace }$ denote the $i$th column vector of $D$ and
$D^s$ denote the $n\times k$ submatrix of $D$ given by the $k$ columns
determined by the set $s$.

Let $F = \left( I|Z \right) $ be a $n \times 2n$ matrix, where $I$ is the
$n \times n$ identity matrix, and $Z$ is a $n \times n$ circulant matrix
defined from the nonzero coefficients $\zeta_1,\ldots,\zeta_k$ as follows:
\begin{equation}
  Z=\left ( \begin{array}{cccccccc}
  0 & 0 & \cdots & 0 & \zeta_k & \zeta_{k-1} & \cdots & \zeta_1 \\
  \zeta_1 & 0 & \cdots & 0 & 0   & \zeta_k   & \cdots & \zeta_2 \\
  \vdots & \vdots & \vdots & \vdots & \vdots & \vdots & \vdots & \vdots \\
  \zeta_k & \zeta_{k-1} & \cdots & \zeta_1 & 0 & 0 & \cdots & 0 \\
  0   & \zeta_k & \cdots & \zeta_2 & \zeta_1 & 0 & \cdots & 0 \\
   \vdots & \vdots & \vdots & \vdots & \vdots & \vdots & \vdots & \vdots \\
   0 & 0 & \cdots & \zeta_k & \zeta_{k-1} & \zeta_{k-2} & \cdots & 0 \\
  \end{array}\right ).
\end{equation}
The matrix $F$ represents the array code $C$, so also the QCFMSR code constructed from $C$.
Each row is the encoding of one
coordinate over the base field $\F_q$, and each node is represented by two
columns, one from $I$ and another one from $Z$. Actually, the node $s_i$, which
stores $(v_{i},\rho_i)$, is also given by
$$
 \left( v_i, \rho_i \right)  = (vI^{\left\lbrace i \right\rbrace
},vZ^{\left\lbrace i \right\rbrace }).
$$
Note that the information coordinates are represented by the identity matrix
$I$, while the redundancy coordinates are represented by the circulant matrix
$Z$.

Circulant matrices have been deeply studied because of their symmetric properties
\cite{Circ01}.
Moreover, $F$ can be seen as a generator matrix of a double circulant code over
$\F_q$ \cite{MW}. Double circulant codes are a special case of quasi-cyclic
codes, which are a family of quadratic residue codes. Quasi-cyclic codes have
already been used for distributed storage \cite{Bl01}, which points out the
significance of these codes for DSS.

\begin{figure}
 \centering
 \begin{tikzpicture}[shorten >=1pt,->]
  \tikzstyle{source}=[rectangle,fill=black!25,minimum size=28pt,inner sep=0pt]
  \tikzstyle{cuteds}=[rectangle,fill=black!25,minimum size=26pt,inner sep=0pt]
  \tikzstyle{code}=[circle,fill=black!35,minimum size=50pt,inner sep=0pt]
  \tikzstyle{storage}=[rectangle,fill=none]

  \node[source] (SC) at (1,2.5) {Source};
  \node[cuteds] (a1) at (2.5,4) {$v_1$};
  \node[cuteds] (a2) at (2.5,3) {$v_2$};
  \node[cuteds] (a3) at (2.5,2) {$v_3$};
  \node[storage] (p1) at (2.5,1){$\vdots$};
  \node[cuteds] (a4) at (2.5,0) {$v_n$};

  \node[code] (MSR) at (4.7,2.5) {quasi-cyclic code};

  \node[cuteds] (c1) at (7,4.5) {$v_1$};
  \node[cuteds] (c2) at (7,3.7) {$\rho_1$};

  \node[cuteds] (c3) at (7,2.2) {$v_2$};
  \node[cuteds] (c4) at (7,1.5) {$\rho_2$};

  \node[storage] (p1) at (7,0.7){$\vdots$};

  \node[cuteds] (cn1) at (7,-0.3) {$v_n$};
  \node[cuteds] (cn) at (7,-1) {$\rho_n$};

  \node[storage] (n1) at (8,4.5) {$s_1$};
  \node[storage] (n2) at (8,2.2) {$s_2$};
  \node[storage] (n3) at (8,-0.3) {$s_n$};

  \node[storage] (exp1) at (2.5,5) {Information coordinates};
  \node[storage] (exp2) at (7,5.5) {Storage nodes};

  \foreach \from/\to in {SC/a1, SC/a2, SC/a3, SC/a4, a1/MSR, a2/MSR, a3/MSR,
a4/MSR, MSR/c1, MSR/c3, MSR/cn1}
    { \draw (\from) -- (\to); }
 \end{tikzpicture}
\caption{Construction process for a $[n,k,r]$ quasi-cyclic flexible MSR code.
}
\label{Fig:3}
\end{figure}
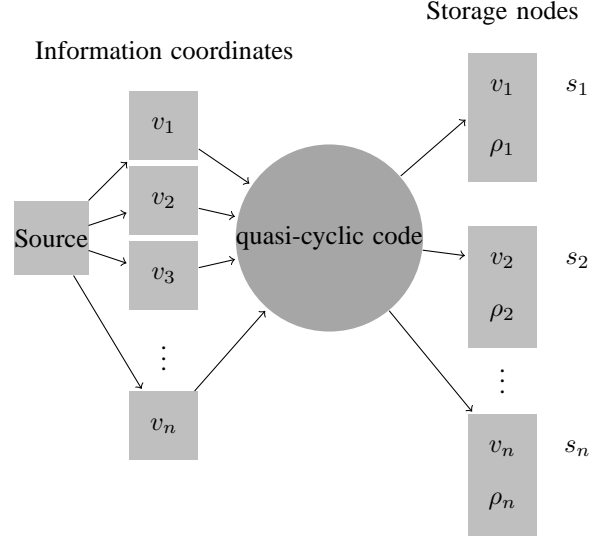

Figure \ref{Fig:3}, shows the construction of a QCFMSR code.
First, the file is split into $n$ symbols over $\F_q$. Then, these symbols are
encoded using $F$ and producing $2n$ symbols over $\F_{q}$. Finally, each two symbols are
stored together in one node, this creates the array code with coordinates over
$\F_q$ that can be seen as array coordinates over $\F_{q^2}$.

\subsection{Node Regeneration}
\label{Regeneration}

In this subsection, we show how to regenerate a failed node $s_i$, which stores
$(v_{i},\rho_i)$, minimizing the required repair bandwidth. Actually, we can
just follow the next algorithm:
\begin{enumerate}
 \item Download the information coordinates $v_j$, $j=i+1,\ldots,i+k \mod n$,
from the next $k$ nodes. Note that due to the circulant scheme, the
next node of $s_n$ is $s_1$. From these information coordinates, compute the
redundancy coordinate $\rho_i$ of the newcomer.

 \item Download the redundancy coordinate $\rho_{i-1}$ from the previous
node, following the same circulant scheme. Solving a simple equation, obtain the
information coordinate $v_i$ of the newcomer.
\end{enumerate}

It can be seen that $r=r_i=k+1$ for any $s_i$, $i=1,\ldots,n,$ and when the
repair problem is faced, it is clear that QCFMSR codes are optimal in terms of
the tradeoff curve given by the threshold function (\ref{eq:1}) for $r=k+1$.
Note that QCFMSR codes are in fact a family of regenerating codes because
$r>k$. However, unlike regenerating codes, for these flexible regenerating codes
the set of $r$ helping nodes is not any but an specific set of remaining
nodes with cardinality $r$. In other words, the set of nodes which is going to
send data to an specific newcomer is fixed.

Note that QCFMSR codes have also the exact repair property, which means that
once encoded, the information and the redundancy can be represented for the
whole life of the DSS by $c=(v_1,\ldots,v_n, \rho_1, \dots, \rho_n)$, where
$v_i$ and $\rho_i$ are the information and redundancy coordinates, respectively.
It is shown in \cite{Shah01} and \cite{Shah02} that when $r < 2k-3$,
exact MSR codes do not exist. However, QCFMSR codes exist for $r=k+1$ which
satisfies $r< 2k-3$ for $k>4$. These facts illustrate the importance of the
flexibility over the set of helper nodes in this construction.
Moreover, despite QCFMSR codes do not achieve uncoded repair, they are very
efficient regenerating one node, because they need only two simple operations
on the newcomer and no operation on the helper nodes.

\subsection{Data Reconstruction}
\label{DataReconstruction}

In Subsection \ref{CodeConstruction}, we have seen that $M=\alpha k$. In this
subsection, we prove that the array code over $\F_{q^2}$, used to construct a
QCFMSR code, satisfies that $d= n-k+1$ for some $\zeta_1,\ldots,\zeta_k$ and, as
a consequence, QCFMSR codes are MDS codes over $\F_{q^2}$ applied to DSS, so
they are MSR codes. In \cite{Ga01}, we performed a computational search to claim
the existence of QCFMSR codes. In this paper, we prove their existence
theoretically.

Let $F^s =(I^s|Z^s)$ denote the $n\times n$ submatrix of $F$
determined by $s=\{i_1,\ldots,i_k\}\in S$. Let $p_s(\zeta_1,\zeta_2, \ldots,
\zeta_k) \in \F_q[\zeta_1,\ldots,\zeta_k]$ be the multivariate polynomial
associated with the determinant of $F^s=(I^s|Z^s)$.

Assume that a DC wants to obtain the file. Then, it connects to any $k$ nodes
$\{s_{i_1},\ldots,s_{i_k}\}$ and downloads
$(v_{i_1},\rho_{i_1}),\ldots,(v_{i_k},\rho_{i_k})$, so the DC is downloading
the encoding given by $F^s$. In order to obtain the file given by $v = \left(
v_1,v_2,\ldots, v_{n} \right)$, we need $F^s$ to be full rank.
Moreover, in order to satisfy the data reconstruction condition, we need
$F^s$  to be full rank for all $s \in S$. Therefore, we have
to prove the following two statements:
\begin{enumerate}
 \item The polynomial associated with the determinant of $F^s$ is
not identically zero, which means that, when it is expanded as a summation of
terms, there exists at least one term with a nonzero coefficient.
 \item The polynomial $p_s(\zeta_1,\zeta_2, \ldots, \zeta_k)$ associated with
the determinant of $F^s$ is
nonzero with high probability, for a random choice of the nonzero coefficients
$\zeta_1,\ldots,\zeta_k$.
\end{enumerate}

We begin by proving the first statement. It is known that there exists a
relation between determinants of matrices and bipartite graphs.
Let $G(W_r \cup W_c,E)$ be the bipartite graph associated with a matrix $F^s$,
where each row of the matrix is represented by a vertex $w_{r_i}$ in $W_r$, and
each column of $F^s$ is represented by a vertex $w_{c_i}$ in $W_c$, where
$i=1,\dots,n$. Two vertices $w_{r_i}\in W_r$, $w_{c_i}\in W_c$ are adjacent if
the entry in the row $i$ and column $j$ of $F^s$ is nonzero. Moreover, the
weight of this edge is the nonzero value of this $i,j$th entry. Let
$E(w_{c_i})$ (resp. $E(w_{r_i})$) denote the neighbors of $w_{c_i}$
(resp. $w_{r_i}$) in the graph $G$. Let $T = \{ t_1,\ldots,t_m \} \subseteq W_c$
be a subset of vertices of $W_c$ or $T \subseteq W_r$ be a subset of vertices of
$W_r$ indistinctly. Let $E(T)$
denote the set $\bigcup_{i=1}^m E(t_i)$.

\begin{lemm}[\cite{Mot}]
\label{lema1}
The polynomial associated with the determinant of $F^s$, $p_s(\zeta_1,\zeta_2,
\ldots,\zeta_k)$, is not identically zero if and only if the bipartite graph
$G(W_r \cup W_c, E)$ associated with $F^s$ has a perfect matching.
\end{lemm}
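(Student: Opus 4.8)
The plan is to expand $\det(F^s)$ by the Leibniz formula and to read each surviving term off as a perfect matching of $G(W_r\cup W_c,E)$. Writing the symmetric group on $\{1,\ldots,n\}$ as $\mathcal{S}_n$ and the $(i,j)$ entry of $F^s$ as $(F^s)_{ij}$, we have
$$
p_s=\det(F^s)=\sum_{\sigma\in\mathcal{S}_n}\mathrm{sgn}(\sigma)\prod_{i=1}^{n}(F^s)_{i\,\sigma(i)}.
$$
The observation that drives everything is that a single product $\prod_{i=1}^{n}(F^s)_{i\,\sigma(i)}$ is a nonzero monomial if and only if each factor $(F^s)_{i\,\sigma(i)}$ is nonzero, which by the definition of adjacency in $G$ means exactly that $w_{r_i}$ is joined to $w_{c_{\sigma(i)}}$ for every $i$. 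Hence the permutations indexing nonzero summands are precisely those whose associated edge set $\{(w_{r_i},w_{c_{\sigma(i)}}):i=1,\ldots,n\}$ is a perfect matching of $G$.

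The forward implication is then immediate: if $p_s$ is not identically zero its Leibniz sum is not the zero polynomial, so at least one summand is a nonzero product, and by the observation above the corresponding permutation supplies a perfect matching. For the converse I would start from a perfect matching, which gives a permutation $\sigma_0$ whose product term is a genuinely nonzero monomial, and then argue that the full sum cannot collapse to $0$. Following the cited reference, I would resolve this by regarding each nonzero position of $F^s$ as carrying its own independent indeterminate (the Edmonds-matrix viewpoint), so that distinct permutations yield distinct monomials that cannot cancel; the presence of even one perfect matching then forces a surviving monomial and hence $p_s\not\equiv 0$.

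The hard part will be exactly this converse, and concretely the cancellation issue it hides. In $F^s$ the coefficients $\zeta_1,\ldots,\zeta_k$ are reused many times through the circulant structure of $Z$, so a priori two different matchings could contribute equal monomials of opposite sign and annihilate one another. The Edmonds-matrix argument disposes of this at the level of independent indeterminates, and one must keep track that this is indeed the polynomial relevant to the later Schwartz--Zippel step of the existence proof, where these indeterminates are specialised to field values. The remaining details are routine bookkeeping: verifying that the correspondence between surviving permutations and perfect matchings is a bijection and that the sign $\mathrm{sgn}(\sigma)$ does not interfere, both of which follow directly from the adjacency rule defining $E$.
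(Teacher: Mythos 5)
The forward direction of your argument is correct and complete: a nonzero Leibniz term is precisely a permutation all of whose entries are nonzero, i.e.\ a perfect matching, and this works verbatim for $F^s$ despite the repeated coefficients. The genuine gap is in the converse, and it is exactly the cancellation issue you flag and then dispose of too quickly. Edmonds' theorem --- the result actually proved in \cite{Mot} --- concerns the matrix whose nonzero positions carry \emph{independent} indeterminates $x_{ij}$; it shows that $\det(X^s)\not\equiv 0$ in $\F_q[\ldots,x_{ij},\ldots]$ whenever a perfect matching exists, because distinct permutations give distinct monomials in those variables. But the lemma is about $p_s(\zeta_1,\ldots,\zeta_k)$, which is the \emph{specialization} of that generic polynomial under $x_{ij}\mapsto \zeta_{(i-j)\bmod n}$ (and $x_{ij}\mapsto 1$ on the identity block). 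Nonvanishing is not preserved under specialization --- cancellation between matchings is precisely the statement that it can fail --- so deducing $p_s\not\equiv 0$ from the Edmonds statement is a non sequitur. A minimal illustration: the generic determinant $x_{11}x_{22}-x_{12}x_{21}$ is a nonzero polynomial and its graph $K_{2,2}$ has a perfect matching, yet specializing all four entries to the single variable $\zeta_1$ gives the zero polynomial. Hence ``perfect matching $\Rightarrow$ determinant polynomial not identically zero'' is false for general matrices with repeated symbolic entries; a correct proof of the lemma as stated must exploit the particular identity-plus-circulant structure of $F^s$, for instance by exhibiting a matching whose monomial is uniquely extremal under some weighting of $\zeta_1,\ldots,\zeta_k$, so that it survives with coefficient $\pm 1$ (one must also watch that integer coefficients do not vanish modulo the characteristic, since $p_s\in\F_q[\zeta_1,\ldots,\zeta_k]$).

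For comparison: the paper does not prove this lemma at all --- it is imported wholesale from \cite{Mot}, whose actual content is the Edmonds-matrix statement --- so the bridge from the generic polynomial to the structured $p_s$ is missing there as well; your write-up has the virtue of making that hidden step explicit, but it does not supply it. Note that the gap matters for the application: the later Schwartz--Zippel step randomizes only the $k$ coefficients $\zeta_1,\ldots,\zeta_k$, because the circulant pattern of $F$ is forced by the code construction, and therefore it needs nonvanishing of the structured polynomial $p_s$ itself, not of its generic lift.
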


To prove the existence of a perfect matching in the bipartite graph $G(W_r \cup
W_c,E)$ associated with $F^s$, we will use Hall's theorem.
\begin{lemm}[\cite{Hall}]
\label{hall}
A bipartite graph $G(W_r \cup W_c, E)$ contains a complete matching from $W_r$
to $W_c$ (resp. $W_c$ to $W_r$) if and only if it satisfies Hall's condition,
that is, for any $T \subseteq W_c$ (resp. $T \subseteq W_r$),
$|T| \le |E(T)|.$ Moreover, if $|W_r|=|W_c|$, the complete matching is
achieved in both directions, so it corresponds to a perfect matching.
\end{lemm}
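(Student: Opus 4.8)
The plan is to establish Hall's condition as simultaneously necessary and sufficient for a complete matching. By the symmetry of the two parts it suffices to treat a single direction: I will show that the condition $|T|\le|E(T)|$ for every $T\subseteq W_c$ is equivalent to the existence of a matching that saturates $W_c$ (matches every vertex of $W_c$ to a distinct vertex of $W_r$); interchanging the roles of $W_r$ and $W_c$ then gives the parenthetical variant. Necessity is immediate: if $M$ saturates $W_c$, then for any $T\subseteq W_c$ the vertices of $T$ are matched by $M$ to $|T|$ distinct vertices of $W_r$, all lying in $E(T)$, whence $|E(T)|\ge|T|$.

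For sufficiency I would induct on $m=|W_c|$. The case $m=1$ is trivial, since $|E(W_c)|\ge 1$ supplies a neighbour for the single vertex. For the inductive step I split according to whether Hall's inequality is everywhere slack or is tight somewhere. In the \emph{slack case}, where $|E(T)|\ge|T|+1$ for every nonempty proper subset $T$ of $W_c$, I pick any vertex $c\in W_c$, match it to any neighbour $r\in E(\{c\})$, and delete both vertices; in the resulting graph each $T\subseteq W_c\setminus\{c\}$ has neighbourhood of size at least $|E(T)|-1\ge|T|$, so Hall's condition persists and the induction hypothesis completes the matching.

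The \emph{tight case}, in which some nonempty proper subset $T_0$ of $W_c$ satisfies $|E(T_0)|=|T_0|$, is the main obstacle and the heart of the argument. Here I peel off $T_0$: the subgraph on $T_0\cup E(T_0)$ inherits Hall's condition and has fewer than $m$ vertices on the $W_c$-side, so induction yields a matching $M_1$ saturating $T_0$ and using all of $E(T_0)$. It remains to re-establish Hall's condition on the complementary graph on $(W_c\setminus T_0)\cup(W_r\setminus E(T_0))$, and this is the delicate point: for $T\subseteq W_c\setminus T_0$ the neighbourhood there is $E(T)\setminus E(T_0)$, and from $E(T\cup T_0)=E(T)\cup E(T_0)$ together with $|E(T_0)|=|T_0|$ one extracts $|E(T)\setminus E(T_0)|\ge|T|$ out of the original inequality $|E(T\cup T_0)|\ge|T|+|T_0|$. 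Induction then supplies a matching $M_2$ saturating $W_c\setminus T_0$, and since $M_1$ and $M_2$ live on disjoint vertex sets, $M_1\cup M_2$ saturates $W_c$.

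Finally, for the \emph{moreover} clause I would observe that when $|W_r|=|W_c|$ any matching saturating $W_c$ necessarily uses $|W_c|=|W_r|$ distinct vertices of $W_r$, hence also saturates $W_r$ and is a perfect matching; equal cardinalities likewise make Hall's condition on $W_c$ equivalent to Hall's condition on $W_r$, so the complete matching exists and coincides in both directions. I expect the tight-case bookkeeping---locating the critical set $T_0$ and verifying that Hall's condition survives the removal of $T_0\cup E(T_0)$---to be the only step requiring genuine care.
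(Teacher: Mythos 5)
Your proof is correct, but note that the paper does not prove this lemma at all: it is stated with a citation to the literature (it is Hall's classical marriage theorem), and the paper's only use of it is as an imported black box feeding Lemma \ref{neighboursUnionVC} and Proposition \ref{indenticallyzero}. What you have written is a complete and accurate rendition of the standard inductive proof (the Halmos--Vaughan argument): necessity via the matched neighbours, and sufficiency by strong induction on $|W_c|$ with the slack/tight dichotomy. The delicate step you flag is handled correctly: in the tight case, from $E(T\cup T_0)=E(T)\cup E(T_0)$, $|E(T_0)|=|T_0|$, and Hall's inequality for $T\cup T_0$ you get $|E(T)\setminus E(T_0)|=|E(T\cup T_0)|-|T_0|\ge |T|$, and the fact that $M_1$ exhausts $E(T_0)$ (forced by $|E(T_0)|=|T_0|$) guarantees $M_1\cup M_2$ is a matching. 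Your treatment of the \emph{moreover} clause is also right, and in fact cleaner than the paper's phrasing: the paper pairs ``a complete matching from $W_r$ to $W_c$'' with the condition on subsets $T\subseteq W_c$, which is the loose pairing, whereas you prove the correct correspondence (the condition on subsets of $W_c$ is equivalent to a matching saturating $W_c$) and then observe that $|W_r|=|W_c|$ upgrades either one-sided complete matching to a perfect matching, making the two conditions equivalent. In short: nothing to fix mathematically; the only difference from the paper is that you supply a proof where the paper supplies a reference.
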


\begin{lemm}
\label{neighboursUnionVC}
Let $T \subseteq W_c$ such that $T\not = \emptyset$, then $|E(T)| \ge |T|$.
\end{lemm}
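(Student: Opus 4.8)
The plan is to establish Hall's condition on the $W_c$ side and then invoke Lemmas~\ref{hall} and~\ref{lema1}. First I would classify the vertices of $W_c$ according to the two blocks of $F^s=(I^s|Z^s)$. The column of $I^s$ indexed by $i\in s$ is the unit vector $e_i$, so it has degree $1$ and its unique neighbour is the row vertex identified with $i$; in particular, for any set $T_I$ of such columns, $E(T_I)\subseteq s$ and $|E(T_I)|=|T_I|$. The column of $Z^s$ indexed by $c\in s$ is supported, by~(\ref{eq:2}), on the $k$ consecutive rows $\{c+1,\dots,c+k\}\pmod n$; hence every $Z$-column has degree $k$, and its neighbourhood is the complement of the length-$k$ cyclic interval $I_c=\{c-k+1,\dots,c\}\pmod n$, an interval having $c$ as one of its endpoints.

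Now write $T=T_I\cup T_Z$, with $a=|T_I|$ and $b=|T_Z|$, and let $C\subseteq s$ be the set of indices of the columns in $T_Z$. Since the neighbours of the $I$-columns are distinct and contained in $s$,
$$|E(T)|=a+|E(T_Z)\setminus E(T_I)|\ge a+|E(T_Z)\cap\overline{s}| ,$$
where $\overline{s}=\{1,\dots,n\}\setminus s$. The case $b=0$ is immediate, so the whole statement reduces to the single inequality $|E(T_Z)\cap\overline{s}|\ge b$.

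To prove this reduced inequality I would pass to complements. Identifying the row vertices with $\{1,\dots,n\}$, each $Z$-column indexed by $c$ has neighbourhood $\{1,\dots,n\}\setminus I_c$, so De Morgan gives
$$E(T_Z)\cap\overline{s}=\overline{s}\setminus\bigcap_{c\in C}I_c ,$$
and it suffices to bound $|\overline{s}\cap J|\le k-b$ for $J=\bigcap_{c\in C}I_c$. If $J=\emptyset$ this is trivial. Otherwise $J$ is itself a cyclic interval, and two facts finish the job. First, one endpoint of $J$ is an endpoint of the minimal cyclic window covering $C$, hence an element of $C\subseteq s$; so $J$ meets $s$ and $|\overline{s}\cap J|\le|J|-1$. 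Second, the $b$ gaps that $C$ determines on the cycle have lengths summing to $n-b$, so the largest gap $g_{\max}$ satisfies $g_{\max}\le n-b$, and one checks that $|J|=g_{\max}-k+1\le(n-b)-k+1=k-b+1$. Combining the two bounds gives $|\overline{s}\cap J|\le k-b$, as required.

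The bookkeeping for the $I$-columns and the reduction to the $Z$-block are routine; the main obstacle is the reduced inequality $|E(T_Z)\cap\overline{s}|\ge b$. The naive estimate for the intersection $J$ of the length-$k$ intervals is off by exactly one, so the decisive point is that $J$ is anchored at an element of $C\subseteq s$ and therefore always loses one slot to $s$; it is the combination of this anchoring with the gap count $g_{\max}\le n-b$ that closes the gap. Once $|E(T)|\ge|T|$ is known for every nonempty $T\subseteq W_c$, Lemma~\ref{hall} (with $|W_r|=|W_c|=n$) produces a perfect matching in $G$, and Lemma~\ref{lema1} then shows that $p_s(\zeta_1,\dots,\zeta_k)$ is not identically zero, which is the first of the two statements to be proved.
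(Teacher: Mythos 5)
Your proof is correct, and after the common first step it takes a genuinely different route from the paper's. Both arguments split $T$ into the degree-$1$ columns (your $T_I$, the paper's $T_1$) and the degree-$k$ columns (your $T_Z$, the paper's $T_2$). The paper then runs inclusion--exclusion on $|E(T_1)\cup E(T_2)|$, using the union bound $|E(T_2)|\ge k+|T_2|-1$ together with a case split on $|T_1|\le k-1$ versus $|T_1|=k$, the latter resting on the claim $|E(T_1)\cap E(T_2)|\le |T_1|-1$. You instead keep the $a$ rows met by $T_I$ inside $s$ and produce $b$ further rows outside $s$, i.e.\ $|E(T_Z)\cap\overline{s}|\ge b$, proved by complementation: $|J|=g_{\max}-k+1\le k-b+1$ for $J=\bigcap_{c\in C}I_c$, plus the anchoring fact that an endpoint of $J$ lies in $C\subseteq s$. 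The two central estimates are actually equivalent (since $|E(T_Z)|=n-|J|$, the paper's $|E(T_2)|\ge k+b-1$ is exactly your $|J|\le k-b+1$), so the real difference is how the two column classes interact, and your bookkeeping buys genuine robustness: it needs no case split on $|T_I|$, and it stays valid when $E(T_Z)$ is all of $\{1,\dots,n\}$ (your $J=\emptyset$ case). That degenerate case is precisely where the paper's intermediate claim breaks down: with $k=3$, $s=\{1,2,4\}$, $T_1$ all three $I$-columns and $T_2$ the $Z$-columns indexed by $1$ and $4$, one has $E(T_2)=\{1,\dots,6\}$ and hence $|E(T_1)\cap E(T_2)|=3=|T_1|$, even though every $Z$-column does have a disjoint partner in $T_1$; the lemma survives there only because $|E(T)|=n\ge |T|$ holds trivially. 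Your anchoring observation is the rigorous form of the step the paper only gestures at. Finally, the two facts you leave as ``one checks'' --- that a nonempty $J$ is a single cyclic interval and that $|J|=g_{\max}-k+1$ --- are both correct, and both hinge on the arcs $I_c$ having length exactly $n/2$ (so at most one gap of $C$ can accommodate the complement of a covering window); a sentence saying this would make the argument self-contained.
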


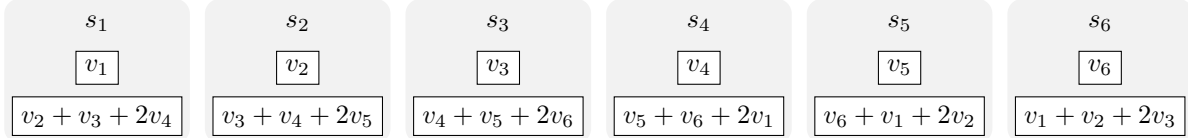
\begin{figure*}
 \begin{tikzpicture}

  \tikzstyle{nodess}=[rectangle,draw,fill=white]
  \tikzstyle{titol}=[rectangle,fill=gray!10]
\tikzstyle{background}=[rectangle, fill=gray!10, inner sep=0.1cm,rounded
corners=2mm]

  \matrix[row sep=0.15cm,column sep=0.2cm] {
       \node (v1) [titol]{$s_1$}; &
        &
       \node (v2) [titol]{$s_2$}; &
        &
       \node (v3) [titol]{$s_3$}; &
        &
       \node (v4) [titol]{$s_4$}; &
        &
       \node (v5) [titol]{$s_5$}; &
        &
       \node (v6) [titol]{$s_6$}; &
        \\

       \node (a1) [nodess]{$v_1$}; &
        &
       \node (a2) [nodess]{$v_2$}; &
        &
       \node (a3) [nodess]{$v_3$}; &
        &
       \node (a4) [nodess]{$v_4$}; &
        &
       \node (a5) [nodess]{$v_5$}; &
        &
       \node (a6) [nodess]{$v_6$}; &
        \\
       \node (r1) [nodess]{$v_2+v_3+2v_4$}; &
        &
       \node (r2) [nodess]{$v_3+v_4+2v_5$}; &
        &
       \node (r3) [nodess]{$v_4+v_5+2v_6$}; &
        &
       \node (r4) [nodess]{$v_5+v_6+2v_1$}; &
        &
       \node (r5) [nodess]{$v_6+v_1+2v_2$}; &
        &
       \node (r6) [nodess]{$v_1+v_2+2v_3$}; &
       \\
    };

    \begin{pgfonlayer}{background}
        \node [background,
                    fit=(v1) (r1)] {};
        \node [background,
                    fit=(v2) (r2)] {};
        \node [background,
                    fit=(v3) (r3)] {};
        \node [background,
                    fit=(v4) (r4)] {};
        \node [background,
                    fit=(v5) (r5)] {};
        \node [background,
                    fit=(v6) (r6)] {};
    \end{pgfonlayer}

 \end{tikzpicture}
\caption{
A $[6,3,4]$ QCFMSR code with coordinates over $\F_q$ and array coordinates over
$\F_q^2$.
}
\label{6_3_Scheme}
\end{figure*}

\begin{proof}
Note that $W_c$ has $k$ vertices of degree 1 and $k$ vertices of degree $k$.
We can decompose $T=T_1 \cup T_2$, where $T_1$ contains the vertices of degree 1
and $T_2$ contains the vertices of degree $k$.
It is clear that $|E(T_1)|=|T_1|$ by construction, and it is easy to see
that $|E(T_2)|\geq k+|T_2|-1 \geq |T_2|$ by the circular
construction of matrix $Z$ and because $k>1$. Therefore, we can assume that
$T_1\neq \emptyset$ and $T_2\neq \emptyset$.

If $|T_1|\leq k-1$, since $|E(T_1) \cap E(T_2)| \leq |E(T_1)| =
|T_1|$, we have that
$|E(T)| =|E(T_1)|+|E(T_2)|-|E(T_1) \cap E(T_2)| \geq
|T_1|+k+|T_2|-1-|T_1|\geq |T_1|+|T_2|=|T|$.
On the other hand, if $|T_1|=k$, then $|E(T_1) \cap E(T_2)| \leq
|T_1|-1$ since for each different vertex $t_i \in T_2$, there exists a
different vertex $t_{j} \in T_1$ such that $E(t_i) \cap E(t_{j}) =
\emptyset$. Thus, we also have
that $|E(T)|=|E(T_1)|+|E(T_2)|-|E(T_1)
\cap E(T_2)| \geq |T_1|+k+|T_2|-1-|T_1|+1 \geq |T_1|+|T_2|=|T|$.
\end{proof}

\begin{prop}
\label{indenticallyzero}
The polynomial associated with the determinant of $F^s$, $p_s(\zeta_1,\zeta_2,
\ldots,\zeta_k)$, is not identically zero.
\end{prop}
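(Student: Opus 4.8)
The plan is to assemble the proposition directly from the three preceding lemmas, since the combinatorial heavy lifting has already been carried out in Lemma~\ref{neighboursUnionVC}. First I would record that $F^s$ is a square matrix of order $n$, so its associated bipartite graph $G(W_r \cup W_c, E)$ has $|W_r| = |W_c| = n$. Lemma~\ref{neighboursUnionVC} establishes exactly Hall's condition on the column side: for every nonempty $T \subseteq W_c$ we have $|E(T)| \ge |T|$, and the degenerate case $T = \emptyset$ holds trivially because $|E(\emptyset)| = 0$. Hence the hypothesis of Hall's theorem is met for all $T \subseteq W_c$.

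Next I would apply Lemma~\ref{hall}. Since Hall's condition holds for every subset of $W_c$, that lemma guarantees that $G(W_r \cup W_c, E)$ contains a complete matching (from $W_r$ to $W_c$, in the notation of Lemma~\ref{hall}). Moreover, because $|W_r| = |W_c| = n$, such a complete matching uses $n$ edges and therefore saturates both parts; by the equal-cardinality clause of Lemma~\ref{hall} it is in fact a perfect matching of $G$.

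Finally I would invoke Lemma~\ref{lema1}, which asserts that the polynomial $p_s(\zeta_1, \ldots, \zeta_k)$ associated with the determinant of $F^s$ is not identically zero if and only if $G(W_r \cup W_c, E)$ admits a perfect matching. Having just exhibited such a matching, we conclude that $p_s$ is not identically zero, which is precisely the claim. I would also stress that the argument is uniform in $s$: nothing above uses any feature of the chosen set $s \in S$ beyond the circulant structure of $Z$ already exploited inside Lemma~\ref{neighboursUnionVC}, so the conclusion holds for every $s \in S$.

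The part demanding the most care is the verification of the \emph{hypotheses} of Hall's theorem rather than its conclusion: one must confirm that $F^s$ is genuinely square of order $n$, so that a one-sided complete matching is forced to be perfect, and that Lemma~\ref{neighboursUnionVC} is applied to precisely the graph $G$ built from $F^s$ and not some other submatrix. Beyond this bookkeeping there is no substantive obstacle, as the essential inequality $|E(T)| \ge |T|$ is already in hand; the proposition is therefore a direct corollary of Lemmas~\ref{lema1}, \ref{hall}, and \ref{neighboursUnionVC}.
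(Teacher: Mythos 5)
Your proof is correct and follows exactly the paper's own argument: the paper likewise combines Lemma~\ref{neighboursUnionVC} (Hall's condition on $W_c$), Lemma~\ref{hall} (with $|W_r|=|W_c|$ upgrading the complete matching to a perfect one), and Lemma~\ref{lema1} to conclude that $p_s(\zeta_1,\ldots,\zeta_k)$ is not identically zero. Your version merely spells out the bookkeeping (squareness of $F^s$, the empty-set case, uniformity in $s$) that the paper leaves implicit.
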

\begin{proof}
Since $|W_r|=|W_c|$, by using Lemmas \ref{hall} and \ref{neighboursUnionVC}, we
have that the bipartite graph $G(W_r\cup W_c,E)$ associated with
$F^s$ has a perfect matching. Finally, by Lemma \ref{lema1}, we know that
$p_s(\zeta_1,\zeta_2, \ldots,\zeta_k)$ is not identically zero.
\end{proof}

\medskip
For the second statement, we have to prove that for a random choice of the
nonzero coefficients $\zeta_1,\ldots,\zeta_n$, the multiplication of all the
multivariate polynomials associated with the determinant of all matrices $F^s,
s\in S$, is nonzero with high probability.

Let $p(\zeta_1,\ldots,\zeta_k) \in \F_q[\zeta_1,\ldots,\zeta_k]$ be the
multivariate polynomial $p(\zeta_1,\ldots,\zeta_k) = \prod_{s \in S}
p_s(\zeta_1,\ldots,\zeta_k)$. Note that if $p(\zeta_1,\ldots,\zeta_k) \ne 0$,
then $p_s(\zeta_1,\ldots,\zeta_k) \ne 0$ for all $s \in S$.

\begin{lemm}
\label{deg}
The degree of $p(\zeta_1,\ldots,\zeta_k)$ is less than or equal to $k {n
\choose k}$. Formally, $\deg(p(\zeta_1,\ldots,\zeta_k)) \le k {n \choose k}$.
\end{lemm}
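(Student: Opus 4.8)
The plan is to bound the degree of each factor $p_s$ separately and then use the behaviour of degree under products. First I would recall that $p(\zeta_1,\ldots,\zeta_k) = \prod_{s\in S} p_s(\zeta_1,\ldots,\zeta_k)$ and that $|S| = {n \choose k}$, the number of $k$-subsets of $\{1,\ldots,n\}$. Since the degree of a product of polynomials is at most the sum of the degrees of the factors, it suffices to show that $\deg(p_s) \le k$ for every $s \in S$; the claimed bound $\deg(p) \le k{n \choose k}$ then follows immediately by summing over the ${n \choose k}$ factors.

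To bound $\deg(p_s)$, I would expand the determinant of $F^s = (I^s \mid Z^s)$ via the Leibniz formula as a signed sum over permutations $\sigma$ of the products $\prod_{i=1}^{n} (F^s)_{i,\sigma(i)}$. The key structural observation is that $F^s$ is an $n\times n$ matrix with $n=2k$, whose $2k$ columns split into the $k$ columns coming from the identity part $I^s$ and the $k$ columns coming from the circulant part $Z^s$. Every entry of $I^s$ is a constant ($0$ or $1$), hence of degree $0$ in the variables $\zeta_1,\ldots,\zeta_k$, whereas every entry of $Z^s$ is either $0$ or a single coefficient $\zeta_l$, hence of degree at most $1$.

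Because each permutation $\sigma$ is a bijection from rows to columns, it selects each of the $n$ columns exactly once; in particular it meets exactly the $k$ columns lying in $Z^s$ and exactly the $k$ columns lying in $I^s$. Thus in the product $\prod_{i} (F^s)_{i,\sigma(i)}$ at most $k$ factors can contribute a variable $\zeta_l$ (namely those hitting $Z^s$-columns), each of degree at most $1$, while the remaining $k$ factors are constants. Hence every monomial appearing in the expansion of $\det(F^s)$ has degree at most $k$, so $\deg(p_s)\le k$, and combining with the product bound yields $\deg(p(\zeta_1,\ldots,\zeta_k)) \le k{n \choose k}$.

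I do not expect a genuine obstacle here: the argument is essentially bookkeeping on the Leibniz expansion. The only point requiring a little care is the counting that each permutation meets exactly $k$ of the $Z^s$-columns, which rests on the bijectivity of $\sigma$ together with the fact that $I^s$ and $Z^s$ each contribute exactly $k$ of the $n=2k$ columns of $F^s$.
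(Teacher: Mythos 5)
Your proof is correct and follows essentially the same route as the paper: bound $\deg(p_s)\le k$ by expanding the determinant of $F^s=(I^s\mid Z^s)$ (you use the Leibniz formula where the paper invokes a minor/Laplace expansion) and observing that only the $k$ columns from $Z^s$ contribute variables, each of degree at most $1$, then sum over the ${n \choose k}$ factors of $p$. Your write-up is in fact a more careful version of the paper's brief argument.
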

\begin{proof}
Each $\zeta_i$, $i=1,\ldots,k$, can appear a maximum of $k$ times in
 $F^s$. By Lagrange minor's theorem, $p_s(\zeta_1,\ldots,\zeta_k)$ has a maximum
degree of $k$. By the definition of $p(\zeta_1,\ldots,\zeta_k)$,
$\deg(p(\zeta_1,\ldots,\zeta_k)) \le k {n \choose k}$.
\end{proof}

\begin{theo}
The ${n \choose k}$ submatrices $F^s, \; s \in S$, are full
rank with high probability for a sufficiently large finite field $\F_q$.
\end{theo}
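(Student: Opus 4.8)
The plan is to reduce the statement to a single application of the Schwartz--Zippel lemma to the product polynomial $p(\zeta_1,\ldots,\zeta_k) = \prod_{s\in S} p_s(\zeta_1,\ldots,\zeta_k)$ introduced above. The key preliminary observation is that $p$ is not identically zero: by Proposition \ref{indenticallyzero} each factor $p_s$ is a nonzero element of the polynomial ring $\F_q[\zeta_1,\ldots,\zeta_k]$, and since this ring is an integral domain, a finite product of nonzero elements is again nonzero. Thus $p \not\equiv 0$. Moreover, $p(\zeta_1,\ldots,\zeta_k)\neq 0$ at a given point forces every $p_s$ to be nonzero there; equivalently, all ${n \choose k}$ submatrices $F^s$ are simultaneously full rank exactly at those points where $p$ does not vanish. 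Hence it suffices to show that a random admissible choice of coefficients avoids the zero set of $p$ with high probability.

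Next I would recall the degree bound $\deg(p)\le k{n \choose k}$ from Lemma \ref{deg} and invoke the Schwartz--Zippel lemma in its multivariate form: for a nonzero polynomial of total degree $D$ in $k$ variables over $\F_q$, the number of zeros in $\F_q^k$ is at most $D\,q^{k-1}$. Because the construction requires the coefficients $\zeta_1,\ldots,\zeta_k$ to be \emph{nonzero}, the random choice ranges over the $(q-1)^k$ points of $(\F_q\setminus\{0\})^k$. Since the zeros of $p$ lying in this smaller domain are still bounded in number by $\deg(p)\,q^{k-1}$, the probability that a uniformly random admissible choice makes $p$ vanish is at most
\[
\frac{\deg(p)\,q^{k-1}}{(q-1)^k} \le \frac{k{n \choose k}\,q^{k-1}}{(q-1)^k}.
\]

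Finally, for fixed $n$ and $k$ the quantity $k{n \choose k}$ is a constant, so the right-hand side behaves asymptotically like $k{n \choose k}/q$ and tends to $0$ as $q\to\infty$. Therefore, for a sufficiently large field $\F_q$, the probability that all $F^s$ are full rank is $1-o(1)$, which is precisely the claim. I expect the main obstacle to lie not in the limiting argument itself---the genuinely hard non-vanishing step has already been discharged in Proposition \ref{indenticallyzero}---but in handling the nonzero-coefficient restriction with care: one must confirm that excluding the coordinate hyperplanes $\zeta_i=0$ does not spoil the bound, and that the Schwartz--Zippel estimate is applied in the correct multivariate form (with the factor $q^{k-1}$) rather than as the one-variable bound $\deg(p)/q$.
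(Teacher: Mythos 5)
Your proposal is correct and follows essentially the same route as the paper: Proposition \ref{indenticallyzero} plus the degree bound of Lemma \ref{deg}, combined in a single application of the Schwartz--Zippel lemma to the product polynomial $p(\zeta_1,\ldots,\zeta_k)$. You are in fact somewhat more careful than the paper's own proof, which directly asserts $\mbox{Pr}(p=0)\le \deg(p)/q$ without explicitly noting that a product of nonzero polynomials over an integral domain is nonzero, and without addressing that the coefficients are sampled from $(\F_q\setminus\{0\})^k$ rather than $\F_q^k$ --- both points you handle explicitly, and neither changes the conclusion.
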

\begin{proof}
By Proposition \ref{indenticallyzero} and using the Schwartz-Zippel lemma
\cite{Zi01}, we know that
$$\mbox{Pr}(p(\zeta_1,\ldots,\zeta_k) = 0) \le
\frac{\deg(p(\zeta_1,\ldots,\zeta_k))}{q}.$$
Therefore, $\mbox{Pr}(p(\zeta_1,\ldots,\zeta_k) \ne 0) \ge 1-
\frac{\deg(p(\zeta_1,\ldots,\zeta_k))}{q}$. And by Lemma \ref{deg}, we know
that $\deg(p(\zeta_1,\ldots,\zeta_k)) \le k {n \choose k}$, so for a
sufficiently large field
size $q$, submatrices $F^s, \; s \in S$, are full rank with high probability.
\end{proof}

\medskip
Summarizing, there is a set of full rank matrices $F^s$, $s \in S$, for a
random choice of the nonzero coefficients $\zeta_1,\ldots,\zeta_n$ and a
sufficiently large finite field. This means that there exists such $F$ that
represents a QCFMSR code with the property that any $k$ storage nodes have
enough information to reconstruct the file. In other words, a random choice of
the coefficients over a sufficiently large finite field gives the encoding for a
quasi-cyclic MDS array code of length $n$  over $\F_{q^2}$ where each array
coordinate of a codeword is $(v_i,\rho_i)$. As this code
is implemented in a DSS following the construction given in Subsection
\ref{CodeConstruction}, it gives a QCFMSR code.

It is worth to mention that using QCFMSR codes, an uncoded piece of the file is
always kept in the system. Moreover, if more than one storage node fails, up to
$n-k$, the decoding for the quasi-cyclic codes has linear complexity in contrast
with the one for Reed-Solomon codes, which has quadratic complexity \cite{MW}.

\subsection{Example}
\label{ExampleMSR}
In this subsection, we describe the construction of a $\left[ 6,3,4 \right] $
QCFMSR code over $\F_{5^2}$.

First, the file is fragmented into $6$ information coordinates $v =
\left( v_1,v_2,\ldots,v_{6} \right) $. Then, each $v_i$ for $i=1,\ldots, 6$
is stored in a node $s_i=(v_i,\rho_i)$, along with its corresponding
redundancy symbol $\rho_i$, which is computed using a quasi-cyclic
matrix $F$ of the following form:

\begin{equation}
\label{Example}
F = \left ( \begin{array}{cccccc|cccccc}
1&0&0&0&0&0 & 0 & 0 & 0 &\zeta_3&\zeta_2&\zeta_1\\
0&1&0&0&0&0 &\zeta_1& 0 & 0 & 0 &\zeta_3&\zeta_2\\
0&0&1&0&0&0 &\zeta_2&\zeta_1& 0 & 0 & 0 &\zeta_3\\
0&0&0&1&0&0 &\zeta_3&\zeta_2&\zeta_1& 0 & 0 & 0   \\
0&0&0&0&1&0 & 0 &\zeta_3&\zeta_2&\zeta_1& 0 & 0   \\
0&0&0&0&0&1 & 0 & 0 &\zeta_3&\zeta_2&\zeta_1& 0   \\
\end{array} \right ).
\end{equation}

By construction, the node regenerating condition is always achieved.
In order to satisfy the data reconstruction condition, we need to find
nonzero coefficients $\zeta_1,\zeta_2,\zeta_3$ such that
$p(\zeta_1,\zeta_2,\zeta_3)\ne 0$
over $\F_5$. Since $p(\zeta_1,\zeta_2,\zeta_3)=\zeta_1^{24} \zeta_2^{12}
\zeta_3^5 (-\zeta_2^2+\zeta_1 \zeta_3)^5
(\zeta_3^3+\zeta_1 ^3) (-\zeta_1\zeta_3^2+\zeta_2^2 \zeta_3)
(-\zeta_3^3-\zeta_1^3),$
a possible solution  over $\F_5$ is $(\zeta_1,\zeta_2,\zeta_3)=(1,1,2)$.
Figure \ref{6_3_Scheme} shows the distribution of
information and redundancy coordinates in the nodes. Each array
coordinate over $\F_{5^2}$ is represented by one storage node. It can be seen
that $d=4-2+1$ and that $\alpha k = M$, so the quasi-cyclic flexible code is a
MSR code.

Using the same argument, it is also possible to construct a
$\left[ 6,3,4 \right] $ QCFMSR code over $\F_{8^2}$ with nonzero coefficients
$(\zeta_1,\zeta_2,\zeta_3)=(1,1,z)$ over $\F_8$, where $z$ is a
primitive element of this field. Note that there is not any $[6,3,4]$ QCFMSR
code over $\F_{2^2}$, $\F_{3^2}$, $\F_{4^2}$ and $\F_{7^2}$.

\section{Regenerating codes with minimum bandwidth}
\label{MBRcodes}

In this section, we show how the QCFMSR codes are converted from minimum
storage codes to minimum bandwidth codes by using the same techniques used in
\cite{Ro01} and \cite{Ku03}, where MDS codes are used as base codes to construct
minimum bandwidth codes. Note that the minimum bandwidth is achieved with
the minimum $\alpha$ such that $\alpha=\gamma$. Moreover, we give some bounds
and conditions for the existence of these minimum bandwidth codes constructed using QCFMSR as base
codes. Specifically, in Subsections \ref{MBR_code_const}, \ref{MBR_node_regen} and
\ref{MBR_data_rec}, we show how to construct these codes in general and their
properties; and in Subsection \ref{MBR_nostres}, we focus on the quasi-cyclic case.

\subsection{Code Construction}
\label{MBR_code_const}

\begin{figure}
\centering
 \begin{tikzpicture}
  \tikzstyle{equa}=[rectangle,fill=black!25,minimum size=20pt,inner
sep=0pt,font=\tiny, text width=1.2cm, text centered]
  \tikzstyle{nodess}=[rectangle,fill=gray!20,minimum size=20pt,inner sep=2pt,
rounded corners=2mm]
  \tikzstyle{background}=[rectangle, fill=gray!10, inner sep=0.2cm,rounded
corners=5mm]
  \tikzstyle{titol}=[rectangle,fill=gray!10,font=\small]

  \node (MSR) [titol] at (0,4.2) {$[6,3,4]$ QCFMSR code};
  \node (MBR) [titol] at (0,-2.6) {$[6,2,2]$ quasi-cyclic flexible
regenerating code with
minimum bandwidth};

  \matrix[row sep=0.15cm,column sep=0.1cm] (M1) at (0,3) {
       \node (v1) [equa]{$v_1$}; &
	&
       \node (v2) [equa]{$v_2$}; &
	&
       \node (v3) [equa]{$v_3$}; &
	&
       \node (v4) [equa]{$v_4$}; &
	&
       \node (v5) [equa]{$v_5$}; &
	&
       \node (v6) [equa]{$v_6$}; &
       \\
       \node (v1') [equa]{$v_2+v_3+2v_4$}; &
       &
       \node (v2') [equa]{$v_3+v_4+2v_5$}; &
       &
       \node (v3') [equa]{$v_4+v_5+2v_6$}; &
       &
       \node (v4') [equa]{$v_5+v_6+2v_1$}; &
       &
       \node (v5') [equa]{$v_5+v_6+2v_1$}; &
       &
       \node (v6') [equa]{$v_1+v_2+2v_3$}; &
       \\
    };

  \matrix[row sep=0.15cm,column sep=0.1cm] (M2) at(0,-0.5) {

       \node (w1) [equa]{$v_1$}; &
	&
       \node (w2) [equa]{$v_2$}; &
	&
       \node (w3) [equa]{$v_3$}; &
	&
       \node (w4) [equa]{$v_4$}; &
	&
       \node (w5) [equa]{$v_5$}; &
	&
       \node (w6) [equa]{$v_6$}; &
       \\

       \node (w1') [equa]{$v_2+v_3+2v_4$}; &
       &
       \node (w2') [equa]{$v_3+v_4+2v_5$}; &
       &
       \node (w3') [equa]{$v_4+v_5+2v_6$}; &
       &
       \node (w4') [equa]{$v_5+v_6+2v_1$}; &
       &
       \node (w5') [equa]{$v_5+v_6+2v_1$}; &
       &
       \node (w6') [equa]{$v_1+v_2+2v_3$}; &
       \\

       \node (w22) [equa]{$v_2$}; &
	&
       \node (w23) [equa]{$v_3$}; &
	&
       \node (w24) [equa]{$v_4$}; &
	&
       \node (w25) [equa]{$v_5$}; &
	&
       \node (w26) [equa]{$v_6$}; &
	&
       \node (w21) [equa]{$v_1$}; &
       \\

       \node (w22') [equa]{$v_3+v_4+2v_5$}; &
       &
       \node (w23') [equa]{$v_4+v_5+2v_6$}; &
       &
       \node (w24') [equa]{$v_5+v_6+2v_1$}; &
       &
       \node (w25') [equa]{$v_5+v_6+2v_1$}; &
       &
       \node (w26') [equa]{$v_1+v_2+2v_3$}; &
	&
       \node (w21') [equa]{$v_2+v_3+2v_4$}; &
       \\
    };

    \begin{pgfonlayer}{background}
        \node [background,
                    fit= (MSR) (v1') (v6) (v6')] {};
        \node [background,
                    fit= (MBR) (w1') (w6) (w21')] {};
        \node (n1) [nodess,
                    fit=(v1) (v1')] {};
        \node (n2) [nodess,
                    fit=(v2) (v2')] {};
        \node (n3) [nodess,
                    fit=(v3) (v3')] {};
        \node (n4) [nodess,
                    fit=(v4) (v4')] {};
        \node (n5) [nodess,
                    fit=(v5) (v5')] {};
        \node (n6) [nodess,
                    fit=(v6) (v6')] {};

        \node (n11) [nodess,
                    fit=(w1) (w22')] {};
        \node (n12) [nodess,
                    fit=(w2) (w23')] {};
        \node (n13) [nodess,
                    fit=(w3) (w24')] {};
        \node (n14) [nodess,
                    fit=(w4) (w25')] {};
        \node (n15) [nodess,
                    fit=(w5) (w26')] {};
        \node (n16) [nodess,
                    fit=(w6) (w21')] {};

    \end{pgfonlayer}

  \foreach \from/\to in {n1.south/n11.north, n1.south/n16.north,
n2.south/n11.north, n2.south/n12.north, n3.south/n12.north, n3.south/n13.north,
n4.south/n13.north, n4.south/n14.north, n5.south/n14.north, n5.south/n15.north,
n6.south/n15.north, n6.south/n16.north}
    { \draw (\from) -- (\to); }

 \end{tikzpicture}

\caption{Construction of a $[6,2,2]$ quasi-cyclic flexible regenerating code
with minimum
bandwidth from a $[6,3,4]$ QCFMSR code.
}
\label{Example1}
\end{figure}
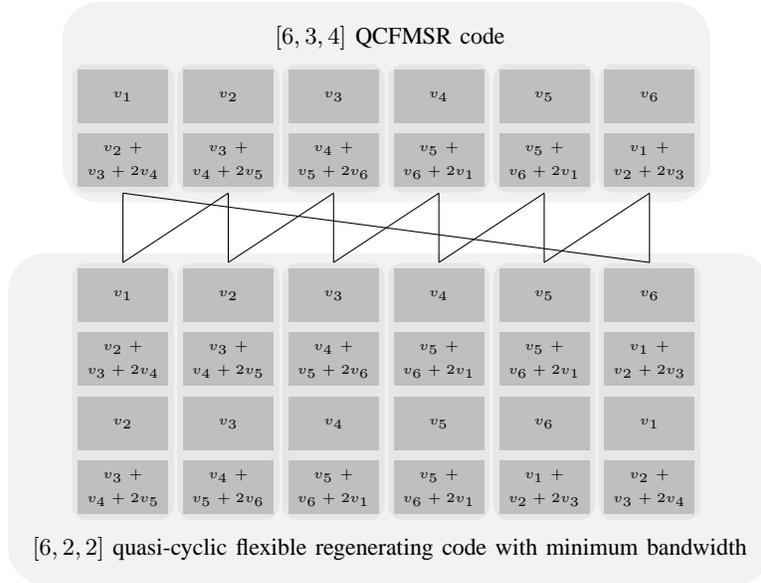

Let $C$ be a $[n,k,r]$ MSR code over $\F_{q^t}$, where any subset of
$k$ storage nodes is enough to reconstruct the file. In this
subsection, we explain how to construct a new $[\bar{n}, \bar{k}, \bar{r}]$
regenerating code $\bar{C}$ with minimum bandwidth over $\F_{q^{t\bar{r}}}$,
from the base code $C$.
Figure \ref{Example1} shows an example of a minimum bandwidth regenerating code
created from a $[6,3,4]$ QCFMSR code, which can be used to illustrate this
construction. Despite it is possible to construct $\bar{C}$ from any
regenerating code, the construction makes sense if the regenerating code $C$
is MSR, because then, $\bar{C}$ achieves the optimal parameters $\alpha$ and
$\gamma$.

\begin{lemm}
\label{lemm:1}
 Given $n \ge 3$, there exists a simple, undirected, and $\bar{r}$-regular graph
$H(W,E)$, where $W$ is the set of vertices where $|W| = \bar{n}$ and $E$ is the
set of edges where $|E| = n$, satisfying the following conditions: $|W|=
\bar{n}$, $|E|=n$,
$1<\bar{r}<\bar{n}$, and $\bar{r}\bar{n} = 2n$.
\end{lemm}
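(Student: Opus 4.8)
The plan is to read the condition $\bar{r}\bar{n} = 2n$ as nothing more than the handshake identity: any $\bar{r}$-regular graph on $\bar{n}$ vertices has exactly $\bar{r}\bar{n}/2$ edges, so demanding $\bar{r}\bar{n}/2 = n$ is precisely the requirement that allows an $\bar{r}$-regular graph on $\bar{n}$ vertices to have $n$ edges. The lemma therefore reduces to exhibiting, for every $n \ge 3$, one admissible pair $(\bar{n},\bar{r})$ together with an explicit simple regular graph realizing those parameters.

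First I would make the simplest admissible choice, namely $\bar{r} = 2$ and $\bar{n} = n$. This gives $\bar{r}\bar{n} = 2n$ as required, and the constraint $1 < \bar{r} < \bar{n}$ becomes $1 < 2 < n$, which holds exactly because $n \ge 3$. The point of choosing $\bar{r}=2$ is that it satisfies $\bar{r}\bar{n} = 2n$ for $\bar{n}=n$ with no arithmetic restriction on $n$, so no factorization obstruction can arise.

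Second, to realize these parameters I would take $H$ to be the cycle graph $C_n$ on the vertex set $W = \{w_1,\ldots,w_n\}$, with edges joining $w_i$ to $w_{i+1}$ for $i = 1,\ldots,n$ and indices read modulo $n$. This graph is simple and undirected, every vertex has degree exactly $2$ so it is $2$-regular, and it has $|W| = n = \bar{n}$ vertices together with exactly $n$ edges, matching $|E| = n$. Each stated equality and inequality then holds by inspection.

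The only point deserving any care is simplicity at the smallest case: for $n = 3$ the graph $C_3$ is the triangle, which is still simple (no loops, no multiple edges) and $2$-regular, so the construction is valid down to the boundary $n = 3$; for $n \ge 4$ simplicity of $C_n$ is immediate. Since the existence claim requires only a single valid graph and the choice $\bar{r}=2$ sidesteps any divisibility condition, there is no genuinely hard step here; the entire argument is the verification that the cycle $C_n$ meets every condition in the statement.
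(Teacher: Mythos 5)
Your argument is internally sound, but it proves a weaker statement than the one the paper actually needs. You read the lemma existentially---``there is \emph{some} admissible pair $(\bar{n},\bar{r})$ and some graph realizing it''---and settle it with the single choice $\bar{r}=2$, $\bar{n}=n$, $H=C_n$. In the paper, however, this lemma is invoked with $\bar{n}$ and $\bar{r}$ prescribed from outside: Lemma \ref{theo:1} begins ``Choose $\bar{n}$, $\bar{k}$ and $\bar{r}$ such that $\bar{r}\bar{n}=2n$, \dots'' and its proof appeals to Lemma \ref{lemm:1} to produce a graph for that chosen $\bar{r}$; the two corollaries and the table in Figure \ref{table:1} then rely on instances with $\bar{r}=3$ and $\bar{r}=4$ (e.g.\ the $[8,3,3]$ code built from a $[12,6,7]$ QCFMSR code). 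So the quantification that matters is universal over all pairs $(\bar{n},\bar{r})$ satisfying $1<\bar{r}<\bar{n}$ and $\bar{r}\bar{n}=2n$. Your cycle graph covers only the $\bar{r}=2$ instance; indeed the paper itself mentions $C_{\bar{n}}$ (and $K_{\bar{n}}$ for $\bar{r}=\bar{n}-1$) immediately after the lemma, but as examples, not as the proof.

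The paper closes the general case in two steps: the Handshaking lemma shows that $\bar{r}\bar{n}=2n$ is exactly the constraint forced by demanding $|E|=n$ (a point you make as well), and the Erd\H{o}s--Gallai theorem shows that the constant degree sequence $(\bar{r},\ldots,\bar{r})$ of length $\bar{n}$ is graphical whenever $1<\bar{r}<\bar{n}$ and $\bar{r}\bar{n}$ is even, the parity being automatic here since $\bar{r}\bar{n}=2n$. To repair your proof, replace the fixed choice $\bar{r}=2$ by an argument valid for every admissible pair: either cite Erd\H{o}s--Gallai as the paper does, or give an explicit construction such as the circulant graph on vertex set $\{0,1,\ldots,\bar{n}-1\}$ in which vertex $i$ is joined to $i\pm 1,\ldots,i\pm\lfloor\bar{r}/2\rfloor \pmod{\bar{n}}$, together with $i+\bar{n}/2$ when $\bar{r}$ is odd (in which case $\bar{n}$ is necessarily even, by parity of $\bar{r}\bar{n}=2n$). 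This realizes every admissible $(\bar{n},\bar{r})$, is simple because $\bar{r}<\bar{n}$, and degenerates to your cycle when $\bar{r}=2$.
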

\begin{proof}
 Condition $\bar{r}\bar{n} = 2n$ is given by the Handshaking lemma for a simple,
undirected, and $\bar{r}$-regular graph. By Erdos-Gallai degree sequence
theorem, for $\bar{r} >1$ and $\bar{r} < \bar{n}$, there exists at least a
simple, undirected, and $\bar{r}$-regular graph such that $\bar{r}\bar{n} = 2n$.
\end{proof}

\medskip

For example, for $\bar{r} = \bar{n}-1$, we have the complete graph
$H=K_{\bar{n}}$,
and for $\bar{r}=2$ we have the cycle graph $H=C_{\bar{n}}$. As $|E| = n$
in $H(W,E)$, it is possible to assume that each edge in $E$ corresponds to a
different coordinate $c_j$ over $\F_{q^t}$ of a codeword
$c=(c_1, \ldots c_n) \in C$. Note that $c_j$ could be also seen as an array coordinate over
the base field $\F_q$, but in this section, we consider $c_j$ just as a coordinate over $\F_{q^t}$.
Given a codeword $c\in C$, since $|W|=\bar{n}$ in $H(W,E)$, we can construct a
codeword $\bar{c}=(\bar{c}_1,\ldots, \bar{c}_{\bar{n}}) \in \bar{C}$, where each
array coordinate ${\bar c}_i$ corresponds to a different vertex
$w_i \in W$ and contains the coordinates of $c$ given by the $\bar{r}$ edges
incident to $w_i$.
Moreover, since the graph is simple, any two vertices can not be connected by
more than one edge, so each coordinate of $c$ is contained in two array
coordinates of $\bar{c}$. As $C$ is defined over $\F_{q^t}$, $\bar{C}$ is
defined over $\F_{q^{t\bar{r}}}$.

In the next subsections, we prove that $\bar{C}$ is a regenerating code with minimum bandwidth.
Firstly, in Subsection \ref{MBR_node_regen} we show that $\gamma=\alpha$. Then,
in Subsection \ref{MBR_data_rec}, we look for the minimum $\alpha$ such that
any $\bar{k}$ array coordinates of $\bar{c} \in \bar{C}$ are enough to
reconstruct the file. Note that $\bar{C}$ is a regenerating code, but not a code
from the
classical point of view, since $|\bar{C}| = |C|$ and $\bar{k}$ is not the
dimension of the code but an integer such that $1 < \bar{k} < \bar{n}$.

\subsection{Node Regeneration}
\label{MBR_node_regen}
Assume that a storage node fails, which is the same as erasing one
array coordinate $\bar{c}_i$, $i=1,\ldots,\bar{n}$ of a codeword
$\bar{c}=(\bar{c}_1,\ldots,
\bar{c}_{\bar{n}}) \in \bar{C}$, or equivalently one vertex $\omega_i \in W$ of
$H(W,E)$. The newcomer can replace the failed node by downloading and storing
the $\bar{r}$ coordinates of $c$ included in each one of the $\bar{r}$
neighbors of $\omega_i$, and given by the corresponding $\bar{r}$ edges
incidents to $\omega_i$. According to this regeneration process,
$\gamma= \alpha$.

Note that these regenerating codes with minimum bandwidth, have the exact repair
and the uncoded repair properties. Also note that the node regeneration is given
by an specific subset of $\bar{r}$ helper nodes, so they can also be called
flexible regenerating codes.

\subsection{Data Reconstruction}
\label{MBR_data_rec}

Let $E(\omega_i)$, $i=1,\ldots,\bar{n}$, be the set of edges incident to the
vertex $\omega_i \in
W$. Let $\bar{S}$ be the set of all subsets of $\{1,\ldots,\bar{n}\}$ of
size $\bar{k}>1$ and let $\bar{s}\in\bar{S}$. For a subset $\bar{s}$,
$|\bigcup_{i\in\bar{s}} E(w_i)|=\sum_{i\in \bar{s}} |E(w_i)|-\theta_{\bar{s}}$,
where $\theta_{\bar{s}}$ represents the intersection terms in the
inclusion-exclusion formula. Since each edge is incident to two vertices, for
$\bar{s}'\subseteq \bar{s}$ of size $|\bar{s}'|>2$, $|\bigcap_{i\in\bar{s}'}
E(w_i)|=0$, so $\theta_{\bar{s}}=\sum_{i<j, \ i,j\in\bar{s}} |E(w_i)\cap
E(w_j)|$. Let $\theta$ be the maximum of all
$\theta_{\bar{s}}$, $\bar{s} \in \bar{S}$. Since $|E(w_i)\cap E(w_j)|\leq 1$ for any $i,j \in
\{1,\ldots,{\bar n}\}$ and $i\neq j$, we have that $\theta \leq {\bar{k}\choose
2}$.

\begin{lemm}
\label{theo:1}
Let $C$ be a $[n,k,r]$ MSR code over $\F_{q^t}$ with $n \ge 3$. Choose
$\bar{n}$, $\bar{k}$ and
$\bar{r}$ such that $\bar{r} \bar{n} = 2n$, $1< \bar{k}< \bar{n}$, $1< \bar{r} <
k$ and $k \le \bar{k} \bar{r} - \theta$. Then, there is a
$[\bar{n},\bar{k},\bar{r}]$ regenerating code $\bar{C}$ over
$\F_{q^{t\bar{r}}}$. Moreover, the minimum $\alpha$ is achieved when $k =
\bar{k}\bar{r}-\theta$.

\end{lemm}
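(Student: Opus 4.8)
The plan is to assemble three ingredients: the existence of the underlying graph, a counting argument for data reconstruction, and a short optimization for $\alpha$. First I would invoke Lemma~\ref{lemm:1} to produce a simple, undirected, $\bar r$-regular graph $H(W,E)$ with $|W|=\bar n$ and $|E|=n$; here the hypotheses $\bar r\bar n=2n$ and $1<\bar r$ are given directly, while $\bar r<\bar n$ must be verified from the chosen parameters so that the lemma applies. With $H$ in hand I build $\bar C$ exactly as in Subsection~\ref{MBR_code_const}: each of the $n$ edges carries one coordinate $c_j$ of a codeword $c\in C$, and each vertex $w_i$ defines an array coordinate $\bar c_i$ collecting the coordinates on its incident edges. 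Since $H$ is $\bar r$-regular and simple, $|E(w_i)|=\bar r$ and every coordinate of $c$ sits in exactly two array coordinates, so $\bar C$ is defined over $\F_{q^{t\bar r}}$.

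The core of the argument is the data reconstruction condition: I must show that any $\bar k$ array coordinates determine the file. Fix $\bar s\in\bar S$ with $|\bar s|=\bar k$. A data collector reading these nodes obtains the coordinates of $c$ indexed by $\bigcup_{i\in\bar s}E(w_i)$, and by the inclusion--exclusion identity recorded before the statement, $|\bigcup_{i\in\bar s}E(w_i)|=\bar k\bar r-\theta_{\bar s}$. Bounding the worst case by $\theta_{\bar s}\le\theta$ and applying the hypothesis $k\le\bar k\bar r-\theta$ gives $|\bigcup_{i\in\bar s}E(w_i)|\ge\bar k\bar r-\theta\ge k$. Because $C$ is an MSR, hence MDS, code over $\F_{q^t}$, any $k$ of the $n$ coordinates of $c$ recover $c$ and therefore the file, so reconstruction succeeds for every size-$\bar k$ subset, not merely a generic one.

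To conclude that $\bar C$ is a $[\bar n,\bar k,\bar r]$ regenerating code with minimum bandwidth, I combine this with $\gamma=\alpha$ from Subsection~\ref{MBR_node_regen}; the bound $\bar r<k$ also yields $\gamma=\alpha=\bar r M/k<M$, so the regeneration is genuinely cheaper than downloading the whole file. For the optimization, each array coordinate stores $\bar r$ coordinates of $c$, each of size $M/k$ (the MSR per-node storage), hence $\alpha=\bar r M/k$. This quantity is strictly decreasing in $k$ while the hypothesis caps $k\le\bar k\bar r-\theta$; therefore the smallest attainable $\alpha$ occurs precisely at $k=\bar k\bar r-\theta$, which is the claimed equality.

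The step I expect to be most delicate is the uniform, worst-case control of reconstruction over all of $\bar S$: one must be sure that replacing $\theta_{\bar s}$ by its maximum $\theta$ is legitimate for every subset and that no higher-order intersection terms survive. This rests on the observation (already made before the statement) that each edge is incident to only two vertices, so $|\bigcap_{i\in\bar s'}E(w_i)|=0$ once $|\bar s'|>2$ and $\theta_{\bar s}$ reduces to a sum of pairwise intersections bounded by $\binom{\bar k}{2}$. A secondary, smaller point is discharging $\bar r<\bar n$ cleanly so that Lemma~\ref{lemm:1} is applicable.
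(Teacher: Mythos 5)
Your proposal is correct and follows essentially the same route as the paper's own proof: Lemma~\ref{lemm:1} supplies the graph, the chain $k \le \bar{k}\bar{r}-\theta \le \bar{k}\bar{r}-\theta_{\bar{s}} = |\bigcup_{i\in\bar{s}}E(w_i)|$ settles data reconstruction for every $\bar{s}$, and $\gamma=\alpha$ is imported from Subsection~\ref{MBR_node_regen}. The only cosmetic difference is the final optimization, where the paper fixes $k$ and minimizes $\bar{r}\ge (k+\theta)/\bar{k}$ while you fix $\bar{r}$ and maximize $k$; both views make the same constraint tight and yield the equality $k=\bar{k}\bar{r}-\theta$.
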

\begin{proof}
Given a file distributed using $C$, we know that there are $n$ nodes and
that any $k$ of those $n$ nodes are enough to reconstruct the file. By Lemma
\ref{lemm:1}, we know that if $n \ge 3$, there exists a set of $\bar{n}$
vertices $W$ and a set of $n$ edges $E$, such that it is possible to construct
$H(W,E)$ with $1<\bar{r}<\bar{n}$ and $\bar{r}\bar{n} = 2n$. Then, from $H(W,E)$
it is possible to construct a code $\bar{C}$ as it is described in Subsection
\ref{MBR_code_const}.

The conditions $1 < \bar{k} < \bar{n}$ and $\bar{r} < k$ are necessary because
if they are not achieved, the code $\bar{C}$ has no sense as a regenerating
code. Note that a subset of cardinal $\bar{r}<k$ coordinates of $c$ contained in
$\bar{r}$ different nodes can regenerate a failed one, so $\gamma <M$. Finally,
in order to reconstruct the file distributed using $\bar{C}$, any subset of
$\bar{k}$ nodes must store at least $k$ coordinates of $c \in C$, so $k \le
|\bigcup_{i \in \bar{s}} E(\omega_i)|$. Since $k \leq \bar{k}\bar{r}-\theta \leq
\bar{k}\bar{r}-\theta_{\bar{s}} = |\bigcup_{i \in \bar{s}} E(\omega_i)|$ this
condition is satisfied. Therefore, $\bar{C}$ is a regenerating code.

In Subsection \ref{MBR_node_regen}, it is shown that $\gamma=\alpha$. Moreover,
$\bar{r}$ is the number of coordinates of $c$ which compose an array coordinate
of $\bar{c}$. Then, the minimum $\bar{r}$ will lead to the minimum $\alpha$.  As
$\bar{r} \ge (k+\theta)/\bar{k}$, the minimum $\bar{r}$ is achieved when $k =
\bar{k}\bar{r}-\theta$.
\end{proof}



As we are trying to minimize $\alpha$, we assume the equality $k =
\bar{k}\bar{r}-\theta$ given by Lemma \ref{theo:1}, and we establish  an upper
bound for the parameter $\theta$ in
Proposition \ref{k_IP}.
\begin{prop}
\label{k_IP}
In the graph $H(W,E)$ with $k = \bar{k}\bar{r}-\theta$, we
have that
\begin{enumerate}
 \item $\theta\le {\bar{k} \choose 2}$ if $\bar{k} \le \bar{r}+1$,
 \item $\theta\le \lfloor\frac{\bar{k}}{\bar{r}+1} \rfloor {\bar{r}+1 \choose
2}+
{\bar{k} \mod(\bar{r}+1) \choose 2}$ if $\bar{k} >\bar{r}+1$.
\end{enumerate}
\end{prop}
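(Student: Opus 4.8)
The plan is to read each $\theta_{\bar s}$ as the number of edges of the subgraph of $H$ induced on the $\bar k$ vertices indexed by $\bar s$, so that $\theta=\max_{\bar s}\theta_{\bar s}$ is the largest number of internal edges achievable on $\bar k$ vertices. First I would record the only structural constraint available: since $H$ is $\bar r$-regular, every vertex of the induced subgraph has degree at most $\bar r$, and hence any set of pairwise-adjacent vertices (a clique) contains at most $\bar r+1$ vertices. The densest admissible configuration on the chosen vertices is therefore a disjoint union of cliques of sizes $a_1,\dots,a_m$ with $\sum_i a_i=\bar k$ and $0\le a_i\le \bar r+1$, and its edge count is $\sum_i \binom{a_i}{2}$. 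Bounding $\theta$ then amounts to solving the integer program $\max \sum_i\binom{a_i}{2}$ under these constraints.

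Next I would solve this program by exploiting the convexity of $x\mapsto\binom{x}{2}$. The key exchange step is that whenever two parts satisfy $0<a_i\le a_j<\bar r+1$, moving one vertex from the smaller to the larger strictly increases the objective, since $\binom{a_j+1}{2}+\binom{a_i-1}{2}-\binom{a_j}{2}-\binom{a_i}{2}=a_j-a_i+1>0$. Iterating this exchange pushes every part up to the boundary value $\bar r+1$, leaving at most one residual part, so the optimum is the greedy clique packing. I would then split according to whether a single clique already absorbs all the vertices. If $\bar k\le \bar r+1$, the whole set fits inside one clique and the maximum is $\binom{\bar k}{2}$, which is statement (1). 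If $\bar k>\bar r+1$, the packing uses $\lfloor \bar k/(\bar r+1)\rfloor$ full cliques of size $\bar r+1$, each contributing $\binom{\bar r+1}{2}$, plus one residual clique on $\bar k\bmod(\bar r+1)$ vertices contributing $\binom{\bar k\bmod(\bar r+1)}{2}$; this is exactly statement (2).

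The main obstacle is the clique reduction in the first paragraph, not the convex optimisation in the second. The degree cap $\bar r$ alone does not force the edge-maximising induced subgraph to be a disjoint union of cliques, so one must argue carefully that, within the family of graphs $H$ relevant to this construction, the extremal $\bar s$ is realised by complete $(\bar r+1)$-cliques and that this is where $\theta$ is attained. Once that structural reduction is granted, the convex exchange argument terminates at the claimed floor/remainder packing and the remaining case bookkeeping is routine; I would therefore spend most of the effort justifying why concentrating the chosen vertices into full cliques, rather than spreading them over a denser near-regular subgraph, is the configuration governing $\theta$ here.
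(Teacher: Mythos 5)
Your proposal is, in substance, the same argument the paper gives for this proposition: identify $\theta_{\bar{s}}$ with the number of edges of the subgraph induced on the $\bar{k}$ chosen vertices, claim that the edge-maximizing configuration is a disjoint packing of cliques of size at most $\bar{r}+1$, and then count $\lfloor \bar{k}/(\bar{r}+1)\rfloor$ full cliques plus one residual clique. Your convexity/exchange step is a cleaner justification of the greedy packing that the paper merely asserts, and part (1) is sound in both treatments, since simplicity of $H$ gives $|E(w_i)\cap E(w_j)|\le 1$ for every pair and hence $\theta_{\bar{s}}\le {\bar{k}\choose 2}$ unconditionally.

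The obstacle you flag at the end, however, is a genuine gap and not a technicality --- and the paper does not close it either: its proof of case (2) simply declares that ``the maximum number of shared edges is when we consider a complete subgraph with $\bar{r}+1$ vertices.'' As you suspect, the degree cap alone does not yield the clique-packing bound; in fact, statement (2) is false if read as a bound valid for every graph $H$ admitted by Lemma \ref{lemm:1}. Take $\bar{r}=2$, $\bar{n}=8$, and let $H$ be the disjoint union of two $4$-cycles (simple, $2$-regular, $\bar{r}\bar{n}=2n$ with $n=8$); for $\bar{s}$ the vertex set of one $4$-cycle, $\theta_{\bar{s}}=4$, whereas the claimed bound is $\lfloor 4/3\rfloor{3\choose 2}+{1\choose 2}=3$. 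Under a degree cap alone, the number of induced edges can reach $\lfloor \bar{k}\bar{r}/2\rfloor$ (near-regular induced subgraphs beat clique packings), which strictly exceeds the clique-packing value whenever $\bar{r}+1$ does not divide $\bar{k}$. The proposition, and the corollaries built on it, can only be salvaged by restricting the choice of $H$: for instance, if $H$ is a disjoint union of complete graphs $K_{\bar{r}+1}$, then every induced subgraph genuinely is a disjoint union of cliques of sizes at most $\bar{r}+1$, and your exchange argument then rigorously yields exactly the stated bound; similar ad hoc checks work for a single long cycle when $\bar{r}=2$, which is the graph used in the paper's Figure \ref{Example2}. So your instinct about where the effort must go is exactly right: neither your proposal nor the paper's own proof establishes the statement as written, and the missing ingredient is a restriction on (or an explicit construction of) the graph $H$, not a finer optimization argument.
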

\begin{proof}
Each node $\omega_i \in W$ has $\bar{r}$ incident edges, so $\bar{k}$ nodes have
$\bar{k}\bar{r}$ edges. Now, we distinguish two cases.

 \textit{Case $\bar{k} \le \bar{r}+1$:} In $H(W,E)$, each vertex $\omega_i$,
$i=1,\ldots,\bar{n}$, shares one, and only one, edge with another vertex
$\omega_j$. Each vertex $\omega_{i}$, $i \in \bar{s}$ and $|\bar{s}| = \bar{k}$,
can share a maximum of one edge with each one of the other vertices $\omega_j, j
\in \bar{s}$, $i \ne j$.
Then, the maximum number of shared edges is ${\bar{k} \choose 2}$. In other
words, when $\bar{k} \le \bar{r}+1$, it is possible to create a complete
subgraph of $\bar{k}$ vertices in $H(W,E)$ with ${\bar{k} \choose 2}$ edges.

 \textit{Case $\bar{k} > \bar{r}+1$:} Given $H(W,E)$ and $\bar{s}$, we are going
to
construct a subgraph which maximizes the number of shared edges. Each vertex
$\omega_i$, $i \in \bar{s}$, can share a maximum of $\bar{r}$ edges with the
remaining
vertices $\omega_j, j \in \bar{s}$, $i \ne j$. Therefore, the maximum number of shared edges
is when we consider a complete
subgraph with $\bar{r}+1$ vertices and ${\bar{r}+1 \choose 2}$ edges.
As $\bar{k} > \bar{r}+1$, there could be
$\lfloor\frac{\bar{k}}{\bar{r}+1}\rfloor$
complete subgraphs, each one with ${\bar{r}+1 \choose 2}$ edges.
The vertices out of these complete subgraphs can share a maximum of
${\bar{k}
\mod (\bar{r}+1) \choose 2}$ edges, which leads to the upper bound $\theta \le
\lfloor\frac{\bar{k}}{\bar{r}+1} \rfloor {\bar{r}+1
\choose 2}+ {\bar{k} \mod(\bar{r}+1) \choose 2} $ for $\bar{k} > \bar{r}+1$.
\end{proof}

\subsection{Quasi-cyclic flexible regenerating codes with minimum bandwidth}
\label{MBR_nostres}

It is possible to use QCFMSR codes as base regenerating codes to
create regenerating codes with minimum bandwidth using the above construction.
In this subsection, we analyze the resulting parameters of these called
quasi-cyclic flexible regenerating codes with minimum bandwidth.

\begin{coro}
For $\bar{k} \le \bar{r}+1$, there exists a $[\bar{n}, \bar{k}, \bar{r}]$
quasi-cyclic flexible regenerating code with minimum bandwidth constructed from
a $[2k, k, k+1]$ QCFMSR code if the set of parameters
$\left\lbrace k, \bar{n}, \bar{k},\bar{r}\right\rbrace$ achieve:
$$ k = \frac{\bar{k}(2\bar{r}-\bar{k}+1)}{2},$$
$$\bar{n} = \frac{2\bar{k}(2\bar{r}-\bar{k}+1)}{\bar{r}},$$
$$ 1 < \bar{r} < k.$$
\end{coro}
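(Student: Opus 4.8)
The corollary specializes Lemma~\ref{theo:1} and Proposition~\ref{k_IP} to the quasi-cyclic case, where the base code is a $[2k,k,k+1]$ QCFMSR code (so $n=2k$ and $r=k+1$), restricted to the regime $\bar{k}\le\bar{r}+1$. The plan is to substitute these concrete values into the general existence conditions and read off the three displayed equations as consequences. First I would invoke Lemma~\ref{theo:1}: a $[\bar{n},\bar{k},\bar{r}]$ regenerating code with minimum bandwidth exists from an $[n,k,r]=[2k,k,k+1]$ base MSR code provided $\bar{r}\bar{n}=2n=4k$, $1<\bar{k}<\bar{n}$, $1<\bar{r}<k$, and the minimum $\alpha$ is obtained precisely at the equality $k=\bar{k}\bar{r}-\theta$.

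The main computational step is to pin down $\theta$. Since we are in the case $\bar{k}\le\bar{r}+1$, Proposition~\ref{k_IP}(1) gives the bound $\theta\le\binom{\bar{k}}{2}$, and this bound is achieved (a complete subgraph on $\bar{k}$ vertices embeds in $H(W,E)$ when $\bar{k}\le\bar{r}+1$). Taking $\theta=\binom{\bar{k}}{2}=\frac{\bar{k}(\bar{k}-1)}{2}$ at the minimizing equality $k=\bar{k}\bar{r}-\theta$ yields
\[
k=\bar{k}\bar{r}-\frac{\bar{k}(\bar{k}-1)}{2}
 =\frac{2\bar{k}\bar{r}-\bar{k}(\bar{k}-1)}{2}
 =\frac{\bar{k}(2\bar{r}-\bar{k}+1)}{2},
\]
which is the first displayed identity. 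Then the second identity follows by solving the Handshaking constraint $\bar{r}\bar{n}=4k$ for $\bar{n}$ and inserting this expression for $k$:
\[
\bar{n}=\frac{4k}{\bar{r}}
 =\frac{2\bar{k}(2\bar{r}-\bar{k}+1)}{\bar{r}}.
\]

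After these substitutions, I would close by checking the remaining feasibility constraints carried over from Lemma~\ref{theo:1}, namely $1<\bar{r}<k$ (stated directly as the third condition) and $1<\bar{k}<\bar{n}$ (which one verifies is automatically compatible with the derived $\bar{n}$ in this regime), so that the hypotheses of the lemma are genuinely met and the construction of Subsection~\ref{MBR_code_const} applies. The part requiring the most care is justifying that $\theta$ equals $\binom{\bar{k}}{2}$ rather than merely being bounded by it: the existence statement uses only the upper bound, but the \emph{minimum}-$\alpha$ claim (i.e. that these exact parameters are the optimal ones) relies on the achievability of the complete-subgraph configuration, which is exactly the content of the $\bar{k}\le\bar{r}+1$ case in Proposition~\ref{k_IP}. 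I expect the only real obstacle to be bookkeeping with floors and the integrality of $\bar{n}$ and $k$, but since the displayed formulas are presented as conditions the parameters must \emph{achieve}, the proof is essentially a direct specialization rather than an independent construction.
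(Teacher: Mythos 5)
Your proposal is correct and follows exactly the route the paper intends: its proof of this corollary is literally ``Straightforward from Lemmas \ref{lemm:1}, \ref{theo:1} and Proposition \ref{k_IP},'' and your derivation (substituting $n=2k$, $r=k+1$, taking $\theta=\binom{\bar{k}}{2}$ from the $\bar{k}\le\bar{r}+1$ case at the minimizing equality $k=\bar{k}\bar{r}-\theta$, and solving $\bar{r}\bar{n}=2n$ for $\bar{n}$) is precisely the omitted specialization. Your remark that the minimum-$\alpha$ claim needs achievability of the complete-subgraph configuration, not just the upper bound on $\theta$, is a careful reading of what the paper leaves implicit.
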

\begin{proof}
Straightforward from Lemmas \ref{lemm:1}, \ref{theo:1} and Proposition
\ref{k_IP}.
\end{proof}

Figure \ref{Example1} shows an example of a quasi-cyclic flexible
regenerating code with
minimum bandwidth for $\bar{k} \le \bar{r}+1$ created from a $[6,3,4]$ QCFMSR
code. Each node $\omega_i\in W$ can be repaired downloading half node
$\omega_{i-1}$ and half node $\omega_{i+1}$. Moreover, any $\bar{k}=2$ nodes in
$\bar{C}$ contain at least $k=3$ different coordinates of $c \in C$, which allow
us to reconstruct the file. Note that $\alpha = \frac{2M}{3}$, which is equal
to the value given in \cite{Di01} for a $[6,3,4]$ MBR code.

\begin{figure*}
\centering
 \begin{tikzpicture}
  \tikzstyle{equa}=[rectangle,fill=black!25,minimum size=10pt,inner sep=0pt,
font=\tiny, text width=1.5cm, text centered]
  \tikzstyle{nodess}=[rectangle,fill=gray!20,minimum size=20pt,inner sep=2pt,
rounded corners=2mm]
  \tikzstyle{background}=[rectangle, fill=gray!10, inner sep=0.2cm,rounded
corners=5mm]
  \tikzstyle{titol}=[rectangle,fill=gray!10, font=\small]

  \node (MSR) [titol] at (0,3.5) {$[10,5,6]$ QCFMSR code};
  \node (MBR) [titol] at (0,-1.5) {$[10,4,2]$ quasi-cyclic felxible regenerating
code with minimum bandwidth};

  \matrix[row sep=0.15cm,column sep=0.05cm] (M1) at (0,2.5) {
       \node (v1) [equa]{$v_1$}; &
	&
       \node (v2) [equa]{$v_2$}; &
	&
       \node (v3) [equa]{$v_3$}; &
	&
       \node (v4) [equa]{$v_4$}; &
	&
       \node (v5) [equa]{$v_5$}; &
	&
       \node (v6) [equa]{$v_6$}; &
	&
       \node (v7) [equa]{$v_7$}; &
	&
       \node (v8) [equa]{$v_8$}; &
	&
       \node (v9) [equa]{$v_9$}; &
	&
       \node (v10) [equa]{$v_{10}$}; &
       \\
       \node (v1') [equa]{$v_2+5v_3+2v_4+v_5+v_6$}; &
       &
       \node (v2') [equa]{$v_3+5v_4+2v_5+v_6+v_7$}; &
       &
       \node (v3') [equa]{$v_4+5v_5+2v_6+v_7+v_8$}; &
       &
       \node (v4') [equa]{$v_5+5v_6+2v_7+v_8+v_9$}; &
       &
       \node (v5') [equa]{$v_6+5v_7+2v_8+v_9+v_{10}$}; &
       &
       \node (v6') [equa]{$v_7+5v_8+2v_9+v_{10}+v_1$}; &
       &
       \node (v7') [equa]{$v_8+5v_9+2v_{10}+v_1+v_2$}; &
       &
       \node (v8') [equa]{$v_9+5v_{10}+2v_1+v_2+v_3$}; &
       &
       \node (v9') [equa]{$v_{10}+5v_1+2v_2+v_3+v_4$}; &
       &
       \node (v10') [equa]{$v_1+5v_2+2v_3+v_4+v_5$}; &
       \\
    };

  \matrix[row sep=0.15cm,column sep=0.05cm] (M2) at (0,0) {

       \node (w1) [equa]{$v_1$}; &
	&
       \node (w2) [equa]{$v_2$}; &
	&
       \node (w3) [equa]{$v_3$}; &
	&
       \node (w4) [equa]{$v_4$}; &
	&
       \node (w5) [equa]{$v_5$}; &
	&
       \node (w6) [equa]{$v_6$}; &
	&
       \node (w7) [equa]{$v_7$}; &
	&
       \node (w8) [equa]{$v_8$}; &
	&
       \node (w9) [equa]{$v_9$}; &
	&
       \node (w10) [equa]{$v_{10}$}; &
       \\

       \node (w1') [equa]{$v_2+5v_3+2v_4+v_5+v_6$}; &
       &
       \node (w2') [equa]{$v_3+5v_4+2v_5+v_6+v_7$}; &
       &
       \node (w3') [equa]{$v_4+5v_5+2v_6+v_7+v_8$}; &
       &
       \node (w4') [equa]{$v_5+5v_6+2v_7+v_8+v_9$}; &
       &
       \node (w5') [equa]{$v_6+5v_7+2v_8+v_9+v_{10}$}; &
       &
       \node (w6') [equa]{$v_7+5v_8+2v_9+v_{10}+v_1$}; &
       &
       \node (w7') [equa]{$v_8+5v_9+2v_{10}+v_1+v_2$}; &
       &
       \node (w8') [equa]{$v_9+5v_{10}+2v_1+v_2+v_3$}; &
       &
       \node (w9') [equa]{$v_{10}+5v_1+2v_2+v_3+v_4$}; &
       &
       \node (w10') [equa]{$v_1+5v_2+2v_3+v_4+v_5$}; &
       \\

       \node (w22) [equa]{$v_2$}; &
	&
       \node (w23) [equa]{$v_3$}; &
	&
       \node (w24) [equa]{$v_4$}; &
	&
       \node (w25) [equa]{$v_5$}; &
	&
       \node (w26) [equa]{$v_6$}; &
	&
       \node (w27) [equa]{$v_7$}; &
	&
       \node (w28) [equa]{$v_8$}; &
	&
       \node (w29) [equa]{$v_9$}; &
	&
       \node (w210) [equa]{$v_{10}$}; &
	&
       \node (w21') [equa]{$v_1$}; &
       \\

       \node (w22') [equa]{$v_3+5v_4+2v_5+v_6+v_7$}; &
       &
       \node (w23') [equa]{$v_4+5v_5+2v_6+v_7+v_8$}; &
       &
       \node (w24') [equa]{$v_5+5v_6+2v_7+v_8+v_9$}; &
       &
       \node (w25') [equa]{$v_6+5v_7+2v_8+v_9+v_{10}$}; &
       &
       \node (w26') [equa]{$v_7+5v_8+2v_9+v_{10}+v_1$}; &
       &
       \node (w27') [equa]{$v_8+5v_9+2v_{10}+v_1+v_2$}; &
       &
       \node (w28') [equa]{$v_9+5v_{10}+2v_1+v_2+v_3$}; &
       &
       \node (w29') [equa]{$v_{10}+5v_1+2v_2+v_3+v_4$}; &
       &
       \node (w210') [equa]{$v_1+5v_2+2v_3+v_4+v_5$}; &
       &
       \node (w21') [equa]{$v_2+5v_3+2v_4+v_5+v_6$}; &
       \\
    };

    \begin{pgfonlayer}{background}
        \node [background,
                    fit= (MSR) (v1') (v10) (v10')] {};
        \node [background,
                    fit= (MBR) (w1') (w6) (w21')] {};
        \node (n1) [nodess,
                    fit=(v1) (v1')] {};
        \node (n2) [nodess,
                    fit=(v2) (v2')] {};
        \node (n3) [nodess,
                    fit=(v3) (v3')] {};
        \node (n4) [nodess,
                    fit=(v4) (v4')] {};
        \node (n5) [nodess,
                    fit=(v5) (v5')] {};
        \node (n6) [nodess,
                    fit=(v6) (v6')] {};
        \node (n7) [nodess,
                    fit=(v7) (v7')] {};
        \node (n8) [nodess,
                    fit=(v8) (v8')] {};
        \node (n9) [nodess,
                    fit=(v9) (v9')] {};
        \node (n10) [nodess,
                    fit=(v10) (v10')] {};

        \node (n11) [nodess,
                    fit=(w1) (w22')] {};
        \node (n12) [nodess,
                    fit=(w2) (w23')] {};
        \node (n13) [nodess,
                    fit=(w3) (w24')] {};
        \node (n14) [nodess,
                    fit=(w4) (w25')] {};
        \node (n15) [nodess,
                    fit=(w5) (w26')] {};
        \node (n16) [nodess,
                    fit=(w6) (w27')] {};
        \node (n17) [nodess,
                    fit=(w7) (w28')] {};
        \node (n18) [nodess,
                    fit=(w8) (w29')] {};
        \node (n19) [nodess,
                    fit=(w9) (w210')] {};
        \node (n20) [nodess,
                    fit=(w10) (w21')] {};
    \end{pgfonlayer}

  \foreach \from/\to in {n1.south/n11.north, n1.south/n20.north,
n2.south/n11.north, n2.south/n12.north, n3.south/n12.north, n3.south/n13.north,
n4.south/n13.north, n4.south/n14.north, n5.south/n14.north, n5.south/n15.north,
n6.south/n15.north, n6/n16.north, n7.south/n16.north, n7.south/n17.north,
n8.south/n17.north, n8.south/n18.north, n9.south/n18.north, n9.south/n19.north,
n10.south/n19.north,n10.south/n20.north}
    { \draw (\from) -- (\to); }

 \end{tikzpicture}

\caption{Construction of a $[10,4,2]$ quasi-cyclic flexible regenerating code
with minimum bandwidth from a $[10,5,6]$ QCFMSR code.
}
\label{Example2}
\end{figure*}
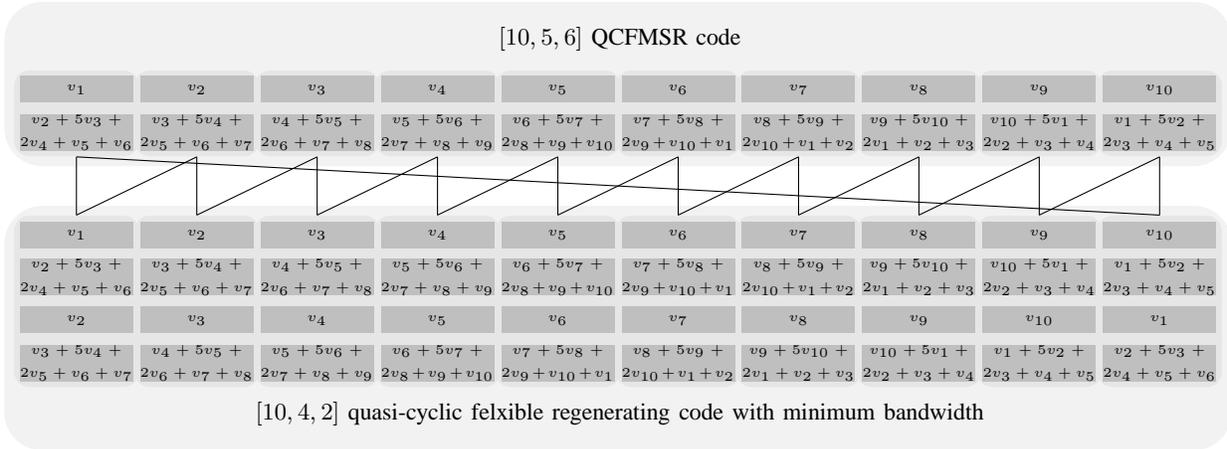

\begin{coro}
For $\bar{k} > \bar{r}+1$, there exists a $[\bar{n}, \bar{k}, \bar{r}]$
quasi-cyclic flexible regenerating code with minimum bandwidth constructed from
a $[2k, k, k+1]$ QCFMSR code if the set of parameters $\left\lbrace k, \bar{n},
\bar{k},\bar{r} \right\rbrace$ achieves:
$$ k = \bar{k}\bar{r} -\left \lfloor\frac{\bar{k}}{\bar{r}+1} \right \rfloor
{\bar{r}+1 \choose 2}+ {\bar{k}
\mod(\bar{r}+1) \choose 2},$$
$$\bar{n} = \frac{2n}{\bar{r}},$$
$$ 1 < \bar{r} < k .$$
\end{coro}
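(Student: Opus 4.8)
The plan is to follow exactly the pattern of the preceding corollary (the case $\bar{k}\le\bar{r}+1$) and simply assemble the three structural results already established, specializing the general minimum-bandwidth construction of Subsection \ref{MBR_code_const} to a QCFMSR base code. First I would take a $[2k,k,k+1]$ QCFMSR code, whose existence over $\F_{q^2}$ is guaranteed by the existence theorem of Section \ref{MSRcodes}; this plays the role of the base $[n,k,r]$ MSR code $C$ in Lemma \ref{theo:1}, with base length $n=2k$, dimension $k$, repair degree $r=k+1$, and extension degree $t=2$. The key bookkeeping point is this identification $n=2k$: with it, the handshaking relation $\bar{r}\bar{n}=2n$ supplied by Lemma \ref{lemm:1} becomes $\bar{n}=\frac{2n}{\bar{r}}=\frac{4k}{\bar{r}}$, which is the second displayed equation of the statement, and the resulting code $\bar{C}$ lives over $\F_{q^{2\bar{r}}}$.

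Next I would verify the hypotheses of Lemma \ref{theo:1}. Since $\bar{k}>\bar{r}+1$ and $\bar{r}>1$ force $\bar{k}\ge 4$, the base length $n=2k$ is at least $3$, so Lemma \ref{lemm:1} yields a simple, undirected, $\bar{r}$-regular graph $H(W,E)$ with $|W|=\bar{n}$, $|E|=n$, and $\bar{r}\bar{n}=2n$. The inequalities $1<\bar{r}<k$ are imposed in the statement, $1<\bar{k}<\bar{n}$ is the remaining compatibility condition to read off from the displayed equations, and the data-reconstruction inequality $k\le\bar{k}\bar{r}-\theta$ is handled in the last step. Granting these, Lemma \ref{theo:1} produces the $[\bar{n},\bar{k},\bar{r}]$ regenerating code $\bar{C}$ and asserts that the minimum $\alpha$ (equivalently the minimum $\bar{r}$) is attained exactly when $k=\bar{k}\bar{r}-\theta$.

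Finally I would specialize $\theta$ to the regime $\bar{k}>\bar{r}+1$. The second case of Proposition \ref{k_IP} gives the maximal intersection count $\theta=\lfloor\frac{\bar{k}}{\bar{r}+1}\rfloor\binom{\bar{r}+1}{2}+\binom{\bar{k}\bmod(\bar{r}+1)}{2}$, realized by packing the $\bar{k}$ chosen vertices into $\lfloor\frac{\bar{k}}{\bar{r}+1}\rfloor$ cliques $K_{\bar{r}+1}$ plus a residual clique on $\bar{k}\bmod(\bar{r}+1)$ vertices. Substituting this value into the optimality equation $k=\bar{k}\bar{r}-\theta$ yields the first displayed equation of the statement, which closes the parameter derivation and hence the existence claim.

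The step I expect to demand the most care is reconciling the inequality $k\le\bar{k}\bar{r}-\theta$ required by Lemma \ref{theo:1} with the equality $k=\bar{k}\bar{r}-\theta$ needed to certify the minimum $\alpha$: for mere existence it suffices that the chosen graph satisfies $\theta\le\bar{k}\bar{r}-k$, which is automatic because the second case of Proposition \ref{k_IP} bounds $\theta$ from above, but for the optimal (minimum bandwidth) point one must ensure that the extremal clique-packing achieving this bound is simultaneously compatible with $\bar{r}$-regularity and the edge count $|E|=n$ guaranteed by Lemma \ref{lemm:1}. Once this compatibility is confirmed, the remainder is a direct citation of Lemmas \ref{lemm:1}, \ref{theo:1} and Proposition \ref{k_IP}, just as in the previous corollary.
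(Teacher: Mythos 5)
Your proposal is correct and is essentially identical to the paper's own proof, which consists of the single line ``Straightforward from Lemmas \ref{lemm:1}, \ref{theo:1} and Proposition \ref{k_IP}'': you have merely unpacked that citation, making explicit the identifications $n=2k$, $r=k+1$, $t=2$, the use of Lemma \ref{lemm:1} for the graph, Lemma \ref{theo:1} for the construction and the optimality condition $k=\bar{k}\bar{r}-\theta$, and case 2 of Proposition \ref{k_IP} for $\theta$. One remark: your substitution actually yields $k=\bar{k}\bar{r}-\left\lfloor\frac{\bar{k}}{\bar{r}+1}\right\rfloor{\bar{r}+1 \choose 2}-{\bar{k} \bmod (\bar{r}+1) \choose 2}$, with \emph{both} terms of $\theta$ subtracted, so the plus sign before the last binomial in the corollary as printed is a typo (the paper's own table entries, e.g.\ the $[10,5,2]$ code from a $[10,6,7]$ QCFMSR code and the $[16,7,3]$ code from a $[24,12,13]$ QCFMSR code, are consistent with your formula, not with the printed one), and your phrase ``yields the first displayed equation of the statement'' silently passes over this discrepancy.
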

\begin{proof}
Straightforward from Lemmas \ref{lemm:1}, \ref{theo:1} and Proposition
\ref{k_IP}.
\end{proof}

Figure \ref{Example2} shows an example of a quasi-cyclic flexible
regenerating code with minimum bandwidth for $\bar{k} > \bar{r}+1$ created from
a $[10,5,6]$ QCFMSR code. Each node $\omega_i \in W$ can be repaired downloading
half node $\omega_{i-1}$ and half node $\omega_{i+1}$. Moreover, any $\bar{k}=4$
nodes in $\bar{C}$ contain at least $k=5$ different coordinates of $c \in C$,
which allows us to reconstruct the file. Note that $\alpha = \frac{2}{5} M$
which is less than $\frac{4}{7}M$, the value given in \cite{Di01} for a
$[10,5,6]$ MBR code. It is worth to mention that the reason of the decreasing on
the lower bound given in \cite{Di01} is the flexibility on the parameter
$\bar{r}$.

\begin{figure}
\begin{center}
\begin{tabular}{|cccc|}
\hline
$[\bar{n},\bar{k},\bar{r}]$ &
$[n,k,r]$ &
$\alpha=\gamma$ \cite{Di01} & $\alpha =
\gamma$\\
\hline
$[6,2,2]$ & $[6,3,4]$ &  $2M/3$ & $2M/3$ \\
$[8,3,3]$ & $[12,6,7]$ &  $M/2$ & $M/2$\\
$[7,2,4]$ & $[14,7,8]$ &   $4M/7$ & $4M/7$\\
$[10,4,4]$ & $[20,10,11]$ &  $2M/5$ & $2M/5$\\
\hline
$[10,4,2]$ & $[10,5,6]$ &  $4M/7$ & $2M/5$\\
$[10,5,2]$ & $[10,6,7]$ &  $5M/9$ & $M/3$\\
$[12,5,3]$ & $[18,9,10]$ &  $5M/12$ & $M/3$ \\
$[16,7,3]$ & $[24,12,13]$ &  $7M/18$ & $M/4$ \\
\hline
\end{tabular}
\end{center}
\caption{ \label{table:1} Parameters
$[\bar{n},\bar{k},\bar{r}]$ for some quasi-cyclic flexible regenerating codes
with minimum bandwidth constructed from $[n,k,r]$ QCFMSR codes, and comparison
between the $\alpha=\gamma$ values given in \cite{Di01} and the ones
achieved with the proposed construction.
}
\end{figure}
Figure \ref{table:1} shows the parameters of some quasi-cyclic flexible
regenerating codes with minimum bandwidth. The first column shows the
parameters $[\bar{n},\bar{k},\bar{r}]$ of the quasi-cyclic flexible regenerating
codes with minimum bandwidth. The second column shows the parameters $[n,k,r]$
of the corresponding QCFMSR codes.  The third and forth columns compare the
minimum $\alpha$ such that $\alpha=\gamma$ for MBR codes as stated in
\cite{Di01} with the one achieved by the quasi-cyclic flexible regenerating
codes with minimum bandwidth. First part of the table shows cases when $\bar{k}
\le \bar{r}+1$, and the second part cases when $\bar{k} > \bar{r}+1$

\section{Conclusions}
\label{Conclusions}

In this paper, we provide some interesting contributions to the current state
of the art of erasure codes applied to distributed storage. We construct a
family of regenerating codes using quasi-cyclic codes, where an
specific set of helper nodes is used to repair a storage node failure.
Firstly, we show a construction designed to minimize the storage per node, the
resulting codes are called quasi-cyclic flexible minimum storage regenerating
(QCFMSR) codes. Moreover, this construction is generalized to achieve bandwidth
optimality instead of storage optimality using graph techniques.

At the MSR point, we provide an exact repair solution for all parameters
achieving $r=k+1$ and $n=2k$. This construction is minimum according to the MSR
point in the fundamental tradeoff curve. Moreover, QCFMSR codes have
a very simple regenerating algorithm that approaches to the repair-by-transfer
property. In our solution, the helper nodes do not need to do any linear
combination among their symbols. The only linear combination is done in the
newcomer to obtain the symbols the first time that it enters into the system.
As far as we are concerned, this is the first construction achieving this
repair simplicity for the MSR point. We also claim that such codes exist with
high probability.

From an industrial point of view, it is interesting to have codes with high
rates, since these are the ones desired for actual data centers. Despite there
are constructions with equal \cite{Ta01} or higher \cite{Pa03} rate than QCFMSR
codes, their other properties (uncoded repair at the helper nodes, low decoding
and repairing complexity, good rate, low repair degree $r=k+1$ and exact repair)
makes them very interesting.

We also use a technique shown in \cite{Ro01} and \cite{Ku03} to construct codes
with minimum bandwidth using graphs and an existing MSR regenerating code. We
analyze and prove this construction giving bounds on the parameters of those
codes. This construction gives the minimum possible bandwidth $\gamma = \alpha$
achieved by an specific set of helper nodes and it has the repair-by-transfer
property. Finally, we show that QCFMSR codes can be used as base codes to
construct the quasi-cyclic flexible regenerating codes with minimum bandwidth.
We provide the conditions needed on the parameters
$\{n,k,d,\bar{n},\bar{k},\bar{r} \}$ for both cases, when $\bar{k} \le
\bar{r}+1$ and $\bar{k} > \bar{r}+1$.

\section*{Acknowledgment}

The authors would like to thank Professor Alexandros G. Dimakis for his helpful
and valuable comments and suggestions to this paper. This work has been
partially supported by the
Spanish MICINN grant TIN2010-17358, the Spanish Ministerio de Educaci\'on FPU
grant AP2009-4729 and the Catalan AGAUR grant 2009SGR1224.

\bibliographystyle{IEEEtran}
\bibliography{references}

\end{document}